\documentclass[aps,pra,notitlepage,twocolumn,longbibliography]{revtex4-2}

%\usepackage[T1]{fontenc}

%% Useful packages
\usepackage{amsmath,amssymb,amsthm,bm}
\usepackage{graphicx}
\usepackage[colorinlistoftodos]{todonotes}
\usepackage[inline]{enumitem}
\definecolor{darkblue}{rgb}{0.1,0.2,0.6}
\definecolor{darkred}{rgb}{0.8,0.1,0.2}
\definecolor{crimson}{RGB}{164,16,52}
\definecolor{darkgreen}{rgb}{0.31,0.62,0.24}
\usepackage[colorlinks=true, allcolors=crimson]{hyperref}
\usepackage{diagbox}
\usepackage{epigraph}
\usepackage{float}
\usepackage{multirow}
\usepackage{tikz}
\usepackage{mathtools}
\usepackage{braket}
\usepackage{soul} % to use \ul
\usepackage[all]{xy}
\usepackage{lipsum}

\setlength{\epigraphrule}{0pt}
\setlength\epigraphwidth{0.7\textwidth}

\usepackage{bbm}

%% Abbreviations
\newcommand{\Rom}[1]{\uppercase\expandafter{\romannumeral#1}}

\newcommand{\pa}{\partial}

\newcommand{\mc}{\mathcal}

\DeclareMathOperator{\Tr}{Tr}

%Add options to matrix environments. 
\makeatletter
\renewcommand*\env@matrix[1][*\c@MaxMatrixCols c]{%
	\hskip -\arraycolsep
	\let\@ifnextchar\new@ifnextchar
	\array{#1}}
\makeatother

%% Theorem titles

\newtheorem{lemma}{Lemma}

\newtheorem*{claim*}{Claim}

% To insert diagrams into equations. 
\newcommand{\dia}[3]{\raisebox{#2pt}{\includegraphics{dia_#1}}\hspace{#3pt}}

\usepackage[normalem]{ulem}

\definecolor{darkred}{rgb}{0.8,0.1,0.2}

\begin{document}
\title{Anyon Quantum Dimensions from an Arbitrary Ground State Wave Function}
\author{Shang Liu}
\email{sliu.phys@gmail.com}
\affiliation{Kavli Institute for Theoretical Physics, University of California, Santa Barbara, California 93106, USA}

\begin{abstract}
	Realizing topological orders and topological quantum computation is a central task of modern physics. An important but notoriously hard question in this endeavor is how to diagnose topological orders that lack conventional order parameters. 
	A breakthrough in this problem is the discovery of topological entanglement entropy, which can be used to detect nontrivial topological order from a ground state wave function, but is far from enough for fully determining the topological order. In this work, we take a key step further in this direction: We propose a simple entanglement-based protocol for extracting the quantum dimensions of all anyons from a single ground state wave function in two dimensions. The choice of the space manifold and the ground state is arbitrary. 
	This protocol is both validated in the continuum and verified on lattices, and we anticipate it to be realizable in various quantum simulation platforms. 
\end{abstract}

\maketitle
%\tableofcontents
\section*{Introduction}
Topologically ordered phases of matter exhibit a number of remarkable properties, such as the existence of fractionalized excitations dubbed anyons, and robust ground state degeneracies on topologically nontrivial spaces \cite{WenBook}. From a practical perspective, they are also promising platforms for fault-tolerant quantum computation \cite{Kitaev2003ToricCode,Nayak2008RMP}. 
%Over the decades, topological orders have attracted vast theoretical and experimental explorations. 

Unlike symmetry breaking orders, topological orders (TOs) lack conventional order parameters. They do not even require any symmetry and sometimes support gapped boundaries. Therefore, diagnosing TOs is generically a difficult task. 
%Recent progress in quantum simulations of topologically ordered states \cite{Rhine2021KagomeRydberg,Ruben2021RubyRydberg,TC_SC_2021,TC_Rydberg_2021} has further highlighted the need for efficient methods to identify them. 
Recent advances in quantum simulating topologically ordered states  \cite{Rhine2021KagomeRydberg,Ruben2021RubyRydberg,TC_SC_2021,TC_Rydberg_2021,Nat2021KW,Nat2021EfficientTO,Nat2022Shortest,Dreyer2023TC,Dreyer2023D4} have further highlighted the need for efficient protocols to identify them. 
A breakthrough in this problem is the discovery of topological entanglement entropy (EE) \cite{Hamma2005TCEE1,Hamma2005TCEE2,Kitaev2006TEE,Levin2006TEE}. It is shown that the EE of a disk region in a two-dimensional gapped ground state wave function contains a universal term dubbed the topological EE, from which we can read off the so-called total quantum dimension $\mc D$ of the system. $\mc D=1$ ($\mc D>1$) for a trivial (nontrivial) TO, and hence the topological EE can be used for detecting nontrivial TOs. However, $\mc D$ is still far from fully characterizing a TO. In particular, it can not distinguish abelian and nonabelian TOs which have very different properties and applications. 
There have been efforts to extract other universal quantities of a TO, such as the chiral central charge either from edge thermal transport \cite{Kane1997ThermalTransport,Cappelli2002ThermalConductance,Kitaev2006Honeycomb} or the bulk wave function \cite{Kim2022ChiralCentralCharge,Kim2022ChiralCCNumerics,Zaletel2022MarkovGap,Fan2022ChiralCentralCharge}, the higher central charge \cite{Ryu2024HigherCC}, and the many-body Chern number \cite{Hafezi2021ChernNumber,Barkeshli2021ChernNumber,Fan2023HallConductance}. However, these quantities again do not distinguish abelian and nonabelian TOs, and vanish for many TOs supporting gapped boundaries. 

If we know the quantum dimensions $d_j$ of all anyon types $j$, we will be able to tell apart abelian and nonabelian TOs, because the former have $d_j=1$ for all $j$, while the latter have $d_j>1$ for some $j$. Intuitively, $d_j$ is the Hilbert space dimension shared by each type-$j$ anyon in the limit of many anyons. More precisely, let $D_j(M)$ be the degeneracy of a particular anyon configuration with $M$ type-$j$ anyons. Then in the large $M$ limit, $D_j(M)/d_j^M$ is of order $1$ \cite{Nayak2008RMP}. $\mc D$ is related to $d_j$'s by $\mc D^2=\sum_j d_j^2$. $d_j$'s impose nontrivial constraints on the fusion rules of anyons, and if the chiral central charge is known, they also constrain the anyon self-statistics \cite{Kitaev2006Honeycomb}. Therefore, $d_j$'s are important quantitative characterizations of the anyon excitations. 

In this paper, we propose a very simple protocol for extracting the quantum dimensions of all anyons from an arbitrary ground state of a TO on an arbitrary space, e.g. a disk. 
There are other existing methods to extract $d_j$ \cite{Kitaev2006Honeycomb,Kitaev2006TEE,Fradkin2008TEE,Vishwanath2012TEE,MeiWen2015ModularMats,Wen2016NegBdry,Wen2016NegSurgery,Kim2020EntBootstrapFusion,YinLiu2023Ent} as well. Some of these methods require knowing the operators for creating anyons, which is unlikely in the case of an unknown wave function. Some other methods require particular state(s) on a torus, which is harder to experimentally prepare than states on a planar geometry. The approach of Ref.\,\onlinecite{Kim2020EntBootstrapFusion} does not need either of these two, but requires accessing some infinite set of density matrices, which is more of conceptual than practical significance. The key outstanding feature of our proposal is that we have avoided the aforementioned requirements. 

In the rest of the paper, we will first describe our protocol, then justify it with a field-theoretic approach, and finally test it on lattices using Kitaev's quantum double models \cite{Kitaev2003ToricCode}. 

\section*{Results}
\subsection*{Protocol}
Consider a two-dimensional topologically ordered system on an arbitrary space manifold with or without a boundary. Let $\ket{\psi}$ be any state with no excitations in a large enough region, say a ground state. We will describe and later justify an efficient protocol for extracting the quantum dimensions of all anyons. 

\begin{figure}
	\centering
	\includegraphics{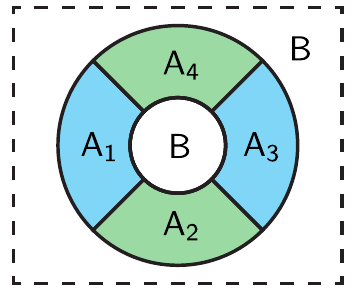}
	\caption{The partition of space used in our protocol. There is an annulus region $A=\bigcup_{i=1}^4 A_i$ subdivided into four parts. The remaining uncolored region is $B$. }
	\label{fig:AnnulusRegion}
\end{figure}
Consider a partition of the space as shown in Fig.\,\ref{fig:AnnulusRegion} in a region with no excitations. $A=\bigcup_{i=1}^4 A_i$ takes an annulus shape, and $B$ is the rest of the system. Our protocol consists of three steps listed below. Note that we will first describe the protocol as if we are performing an analytical or numerical computation. A possible experimental realization will be given later. 
\begin{itemize}
	\item \textbf{Step 1:} Obtain the reduced density matrix $\rho_A:=\Tr_B\ket{\psi}\bra{\psi}$ for the annulus region $A$. 
	\item \textbf{Step 2:} Map $\rho_A$ to a pure state in the doubled Hilbert space: Let $\rho_A=\sum_{i,j}M_{ij}\ket{i}\bra{j}$ where $\{ \ket{i} \}$ is an arbitrary real-space tensor product basis for Region $A$. We define
	\begin{align}
		\ket{\rho_A}&:=\frac{1}{\sqrt{\Tr(\rho_A^2)}}\sum_{i,j}M_{ij}\ket{i}\ket{j}. 
	\end{align}
	
	\item \textbf{Step 3:} Denote the doubled system by $A\cup A'$, and divide $A'$ as $\bigcup_{i=1}^4 A_i'$ in the same way as $A$. Compute the Renyi mutual information $I^{(n)}$ (defined later) between $A_1\cup A_1'$ and $A_3\cup A_3'$ for several different Renyi indices $n$, and solve the anyon quantum dimensions $d_j$ according to the following formula. 
	\begin{align}
		I^{(n)}(11',33')=\frac{1}{n-1}\log\left[ \sum_j\left(\frac{d_j}{\mc D}\right)^{4-2n} \right], 
		\label{eq:ProtocolFormula}
	\end{align}
	where $ii'$ stands for $A_i\cup A_i'$, and $\mc D:=\sqrt{\sum_j d_j^2}$ is the total quantum dimension. 
\end{itemize}
Here, the Renyi mutual information is defined as usual by  $I^{(n)}(X,Y):=S^{(n)}_X+S^{(n)}_Y-S^{(n)}_{X\cup Y}$, where $S^{(n)}_P:=(1-n)^{-1}\log\Tr(\rho_P^n)$ is the Renyi entropy. 

Intuitively, $\ket{\rho_A}$ is a particular ground state of the TO on the torus obtained by gluing $A$ and $A'$ along their boundaries. We will later carefully justify this picture and determine this special state. Once this is done, Eq.\,\ref{eq:ProtocolFormula} follows from a known result about mutual information on torus \cite{Wen2016NegSurgery}. 

A few comments are in order. In Step 2, the map from $\rho_A$ to $\ket{\rho_A}$ is basis dependent. If we choose a different real-space tensor product basis, then the new pure state $\ket{\rho_A}_{\rm new}$ is related to the old one by a {local} basis rotation in $A'$. This does not affect the entanglement based quantity $I^{(n)}$ that we need. In Step 3, a possible strategy for solving all $d_j$ is as follows: First obtain $I^{(2)}$ which gives the total number of anyon sectors $t$. Then obtain $I^{(n)}$ for more than $t$ number of additional Renyi indices, from which we can uniquely determine all $d_j/\mc D$. Since we know the smallest quantum dimension is that of the vacuum sector, $d_0=1$, we can subsequently find the values of $\mc D$ and all $d_j$. Note that for an abelian TO where $d_j=1$ for all $j$, $I^{(n)}=2\log\mc D$ is $n$-independent. Hence, if we are accessible to only a limited number of Renyi indices, although we are not able to obtain all $d_j$, we can still tell whether the TO is abelian or nonabelian. We shall also remark that Renyi EEs for different Renyi indices $n$ have rather different quantum information properties. For example, the strong subadditivity condition holds only for $n=1$ \cite{Berta2015RenyiCMI}. Our proof of Eq.\,\ref{eq:ProtocolFormula}, as we will see later, does not utilize such kind of quantum information property and thus holds for all $n$. 

For integer values of $n$, the quantity $I^{(n)}(11',33')$ proposed above can in principle be experimentally measured. To see this, we need to first understand how to prepare the state $\ket{\rho_A}$ in practice. 
Let $\{\ket{i} \}$ and $\{ \ket{\mu} \}$ be orthonormal bases of $A$ and $B$, respectively. We can write $\ket{\psi}=\sum_{i,\mu}\psi_{i\mu}\ket{i,\mu}$. It follows that $\ket{\rho_A}\propto\sum_{i,j,\mu}\psi_{i\mu}\psi^*_{j\mu}\ket{i}\ket{j}$. Denote by $\ket{\psi^*}=\sum_{i,\mu}\psi_{i\mu}^*\ket{i,\mu}$ the time-reversed copy of $\ket{\psi}$, i.e. the conjugate state in the chosen basis. We observe that $\ket{\rho_A}$ is proportional to $\bra{\Psi}(\ket{\psi}\otimes \ket{\psi^*})$, where $\ket{\Psi}\propto\sum_\lambda\ket{\lambda}\ket{\lambda}$ is a maximally entangled state living in two copies of $B$. This is illustrated using tensor diagrams in Fig.\,\ref{fig:TensorRep}. It means that to prepare $\ket{\rho_A}$, we may first prepare the state $\ket{\psi}\otimes \ket{\psi^*}$, and then implement a partial projection onto the state $\ket{\Psi}$ using projective measurements with postselections. If $\ket{\psi}$ can be prepared in a quantum simulation platform using unitary circuits and measurements, then it should be equally easy to prepare the time-reversed copy $\ket{\psi^*}$. 
%If instead $\ket{\psi}$ is realized in a real material, we may consider utilizing the CPT symmetry: realizing the time-reversal $\mc T$ by implementing the charge conjugation $\mc C$ combined with the mirror reflection $\mc P$. 
Once $\ket{\rho_A}$ can be prepared, one can measure EEs (and therefore the mutual information) for integer Renyi indices $n\geq 2$ using established methods. For example, to measure the $n$-th Renyi EE of a subregion $R$ in a pure state, it suffices to measure the expectation value of the ``shift operator'' $C_n$. By definition, $C_n$ acts on $n$ copies of the same pure state, and its effect is to cyclically permute the $n$ copies of subregion $R$. We note that both the postselections required for obtaining $\ket{\rho_A}$ and the measurement of EEs require resources that scale exponentially with the system size. Nonetheless, this is not a problem in principle. Since we are dealing with gapped quantum systems with finite correlation lengths, there is no need to go to very large system sizes to get accurate results -- We just need the size of each subregion to well exceed the correlation length.

\begin{figure}
	\centering
	\includegraphics{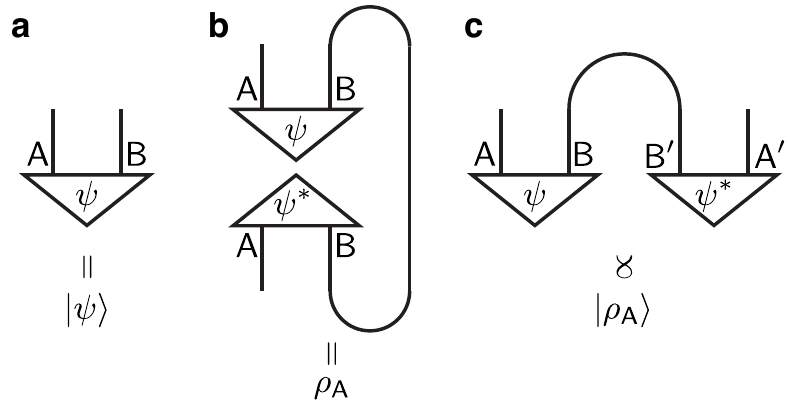}
	\caption{Tensor diagrams for the states used in the protocol. Panels a, b, and c represent $\ket{\psi}$, $\rho_A$, and $\ket{\rho_A}$, respectively. Panel c also illustrates a practical way of preparing the state $\ket{\rho_A}$. }
	\label{fig:TensorRep}
\end{figure}

%We thus expect our protocol to be practically useful for both experiments and numerics. 

\subsection*{Continuum Approach}
In this section, we will give a field-theoretic proof of Eq.\,\ref{eq:ProtocolFormula}, assuming the underlying TO to be described by a Chern-Simons (CS) theory \cite{Witten1989CSTheory}. The readers need not be familiar with CS theories, and just need to know that (1) a CS theory is a gauge theory with some compact gauge group $G$, and (2) it is a topological field theory, meaning that the action has no dependence on the spacetime metric and only the spacetime topology matters. 
As mentioned previously, we require the state $\ket{\psi}$ to have no excitations (zero energy density) in a large enough region. We expect that the reduced density matrix of $\ket{\psi}$ on a disk deep inside this region has no dependence on the boundary condition or excitations far away \cite{Cui2020QuantumDouble,Wen2016NegBdry,Ludwig2012TOEESpectrum}. 
Hence, for simplicity, we assume $\ket{\psi}$ to be the unique ground state of the TO on a sphere. In the CS theory, up to a normalization factor, this state can be prepared by performing the path integral in the interior of the sphere, i.e. on a solid ball, as illustrated in Fig.\,\ref{fig:Surgery}a. 
\begin{figure}
	\centering
	\includegraphics{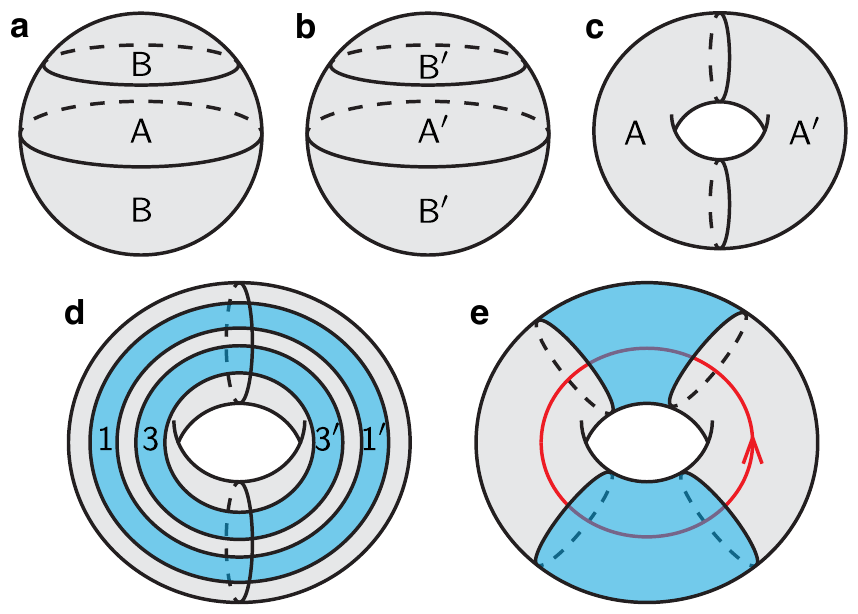}
	\caption{Illustration of the field-theoretic approach. (a) The sphere state $\ket{\psi}$ is prepared by doing the path integral on the solid ball as indicated by the grey shade. (b) Path integral for $\bra{\psi}$. (c) Path integral for $\rho_A$ or $\ket{\rho_A}$. (d) Regions $A_1\cup A_1'$ and $A_3\cup A_3'$. (e) The effect of an $\mc S$ transformation. }
	\label{fig:Surgery}
\end{figure}

Given the path integral representation of $\ket{\psi}$, we take a mirror image for $\bra{\psi}$ \footnote{One may check from the CS theory action that taking the complex conjugate of the wave function is equivalent to a mirror reflection of the spacetime manifold. } as shown in Fig.\,\ref{fig:Surgery}b, and then glue together $B$ of $\ket{\psi}$ and $B'$ of $\bra{\psi}$ to obtain the path integral for $\rho_A$. The result is shown in Fig.\,\ref{fig:Surgery}c. Up to a normalization, $\ket{\rho_A}$ has the same path integral representation as $\rho_A$, it is therefore a state living on the torus. 
%We note that CS theories are topological field theories, which means only the topology of the spacetime manifold matters. Hence, we need not keep track of any metrics of the manifolds in the above operations. 
The Hilbert space on a torus without anyon excitation is multidimensional. An orthonormal basis of the space, denoted by $\{ \ket{R_j} \}$, one-to-one corresponds to a finite set of representations $\{R_j\}$ of the gauge group, and also one-to-one corresponds to the anyon types of the TO. The state $\ket{R_j}$ can be prepared by performing the path integral on the solid torus (bagel) with a noncontractible Wilson loop \footnote{A Wilson loop is a certain observable defined on an oriented loop and labeled by a representation of the gauge group. It can be regarded as an anyon world line. } carrying the corresponding representation $R_j$ inserted. As shown in Fig.\,\ref{fig:Surgery}c, the path integral for $\ket{\rho_A}$ has no Wilson loop insertion. The state thus corresponds to the trivial representation, or the trivial anyon sector (vacuum). 

By keeping track of the subregions of $A$, we observe that $A_1\cup A_1'$ and $A_3\cup A_3'$ correspond to two annuli shown in Fig.\,\ref{fig:Surgery}d. We are now supposed to compute the Renyi mutual information between these two regions. To this end, it is convenient to first apply an $\mc S$ transformation \cite{Witten1989CSTheory}, whose effect is shown in Fig.\,\ref{fig:Surgery}e: The two annuli now wind in the perpendicular direction, and a Wilson loop is inserted in the path integral. This new torus state is given by $\sum_j\mc S_{0j}\ket{R_j}$, where $\mc S_{0j}=d_j/\mc D$ are components of the modular $\mc S$ matrix. The desired mutual information $I^{(n)}$ can now be computed using the replica trick and the surgery method \cite{Witten1989CSTheory,Fradkin2008TEE,Wen2016NegSurgery}. In fact, this has been done in Appendix B.4 of Ref.\,\onlinecite{Wen2016NegSurgery} (plug in $\psi_a=\mc S_{0a}$), and the technique is also pedagogically explained in that paper. We arrive at the result in Eq.\,\ref{eq:ProtocolFormula}. 

As a fixed-point theory, the CS theory only captures the universal terms in EEs. For a generic gapped field theory or lattice model, the EE of a region also contains nonuniversal terms such as the ``area-law'' term proportional to the length of region boundary, and terms due to corners or other sharp features which are inevitable on lattices. We need to discuss whether the quantity $I^{(n)}$ we consider contains any nonuniversal term. For a general gapped theory, we expect the picture of Fig.\,\ref{fig:Surgery}d still holds, although the theory is now not topological and depends on the spacetime metric. If we assume that nonuniversal terms in the EEs are made of {local} contributions (which are insensitive to changes far away) near the partition interfaces \cite{Kitaev2006TEE,Levin2006TEE}, then we see that all such terms have been canceled in $I^{(n)}$. 
For example, the boundary of $A_1\cup A_1'$ contributes the same nonuniversal terms to  $S^{(n)}_{11'}$ and $S^{(n)}_{11'\cup 33'}$, and these terms have been canceled in $I^{(n)}(11',33')$. We note that the locality assumption about nonuniversal terms does not hold in certain systems with the so-called {suprious} long-range entanglement \cite{Bravyi2008Spurious,Zou2016Spurious,Cano2015Spurious,Cheng2019Spurious,Schuch2019Spurious,Kato2020Spurious,Kim2023EELowerBound}, which will not be considered in this work. 
%In terms of the original space shown in Fig.\,\ref{fig:AnnulusRegion}, the two interfaces between $A_1$ ($A_3$) and $A_2\cup A_4$ contribute nonuniversal terms to $S^{(n)}_{11'}$ ($S^{(n)}_{33'}$), and the four $A_iA_{i+1}$ interfaces all contribute nonuniversal terms to $S^{(n)}_{11'33'}$, but these contributions cancel out in the linear combination $I^{(n)}$. 
As one test of the universality of $I^{(n)}$, one can manually add a local bunch of coupled qubits to the state $\ket{\psi}$ at an arbitrary location, and observe that the final result of $I^{(n)}$ has no dependence on the state of these qubits. 

\subsection*{Test on Lattices}
In addition to the continuum approach, we have also tested our protocol on lattices using Kitaev's quantum double models \cite{Kitaev2003ToricCode}. This calculation is rather involved, so in the main text, we will only consider the simplest example -- that of the toric code model. The most general cases will be discussed in the Supplementary Information. 
\begin{figure}
	\centering
	\includegraphics{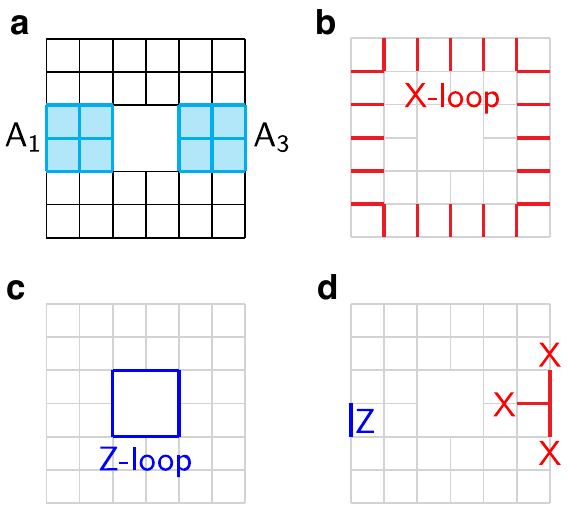}
	\caption{The toric code example. (a) The annulus region $A$. Subregions $A_1$ and $A_3$ (b) An $X$-loop operator. (c) A $Z$-loop operator. (d) Examples of boundary operators. }
	\label{fig:TCLattice}
\end{figure}

Given a square lattice with qubits living on the edges (links), the toric code model is defined by the following Hamiltonian.  
\begin{align}
H_{\rm TC}&=-\sum_v\dia{Z2TCVertex}{-19}{0}-\sum_f\dia{Z2TCFace}{-19}{0}\nonumber\\
&\equiv -\sum_v\mc X_v-\sum_f\mc Z_f. 
\label{eq:TCHamiltonian} 
\end{align}
The two set of terms in $H_{\rm TC}$ are usually called star and plaquette terms, respectively. Each star term $\mc X_v$ (plaquette term $\mc Z_f$) is the product of all Pauli-$X$ (Pauli-$Z$) operators surrounding a vertex $v$ (face $f$). These terms all commute with each other, and a ground state of $H_{\rm TC}$ is a simultaneous eigenstate of them with eigenvalue $+1$. 
It is not hard to generalize this definition to more general lattices, hence we can put the toric code model on surfaces of different topologies. On a sphere, $H_{\rm TC}$ has a unique ground states, but on a topologically nontrivial space such as a torus, $H_{\rm TC}$ has degenerate ground states. 

For simplicity, let us try implementing our protocol on the unique ground state $\ket{\Omega}$ on a sphere. For later convenience, we denote by $G$ the abelian group generated by all $\mc X_v$ and $\mc Z_f$ operators. $G$ is an example of the so-called stabilizer groups, and elements of $G$ are called stabilizers \cite{NielsenChuangBook}. Let the total number of qubits be $N$. $G$ can be generated by $N$ number of independent stabilizers $\{s_1,s_2,\cdots,s_N\}$; for example, we can take this set to be all but one $\mc X_v$ together with all but one $\mc Z_f$ (since $\prod_v\mc X_v=\prod_f\mc Z_f=1$). 
The full density matrix $\ket{\Omega}\bra{\Omega}$ can be interpreted as the projector onto the one-dimensional eigensubspace of $+1$ eigenvalue for all stabilizers in $G$. Hence, 
\begin{align}
	\ket{\Omega}\bra{\Omega}=\prod_{i=1}^N\left(\frac{1+s_i}{2}\right)=\frac{1}{2^N}\sum_{g\in G}g. 
\end{align}

We take an annulus region $A$ as shown in Fig.\,\ref{fig:TCLattice}a. Let $G_A\subset G$ be the subset of stabilizers acting in $A$, we have
\begin{align}
	\rho_A=\frac{1}{2^{N_A}}\sum_{g\in G_A}g,  
\end{align}
where $N_A$ is the number of qubits in $A$. $G_A$ is generated by all star and plaquette terms in $A$ as well as two loop operators shown in Fig.\,\ref{fig:TCLattice}, panels (b) and (c). We denote the two loop operators as $\mc W_X$ and $\mc W_Z$, respectively. 
The $Z$-loop ($X$-loop) operator $\mc W_Z$ ($\mc W_X$) is the product of all Pauli-$Z$ (Pauli-$X$) operators along a loop living on edges (dual lattice edges). 

We claim that the state $\ket{\rho_A}$ is a ground state of $H_{\rm TC}$ on a torus, where the torus is obtained by taking two copies of $A$ and identifying their corresponding boundary vertices. To prove this claim, we need to verify that $\ket{\rho_A}$ has $+1$ eigenvalue under all star and plaquette terms on the torus. We observe that all these terms have either of the following two forms, where we use $\otimes$ to connect operators acting on the two copies of $A$. 
\begin{itemize}
	\item $\mc O\otimes 1$ or $1\otimes \mc O$, where $\mc O$ is an $\mc X_v$ or $\mc Z_f$ operator acting in $A$. 
	\item $\Delta\otimes\Delta$, where $\Delta$ acts near the boundary of $A$, and two examples of $\Delta$ are given in Fig.\,\ref{fig:TCLattice}b. 
\end{itemize}
$\ket{\rho_A}$ satisfies the first set of stabilizers since $\mc O\rho_A=\rho_A\mc O=\rho_A$, where we used the fact that $\mc X_v$ and $\mc Z_f$ are both real Hermitian operators. $\ket{\rho_A}$ satisfies the second set of stabilizers because all boundary operators $\Delta$ commute with $G_A$ and thus $\Delta\rho_A\Delta=\rho_A$. This finishes the proof of the claim. On a torus, $H_{\rm TC}$ has four degenerate ground states. $\ket{\rho_A}$ can be uniquely determined by specifying two more stabilizers such as $\mc W_X\otimes 1$ and $\mc W_Z\otimes 1$. 

It remains to compute the Renyi mutual information $I^{(n)}(A_1\cup A_1',A_3\cup A_3')$, where $A_1$ and $A_3$ are shown in Fig.\,\ref{fig:TCLattice}a. Renyi EE and therefore mutual information can be conveniently computed in the stabilizer formalism \cite{Chuang2004StabilizerEE}. Let $\ket{\psi}$ be an $M$-qubit stabilizer state determined by a stabilizer group $H$. Let $R$ be a subregion with $M_R$ number of qubits and $H_R\subset H$ be the subgroup of stabilizers in $R$. From the reduced density matrix $\rho_R:=\Tr_{\bar R}(\ket{\psi}\bra{\psi})=2^{-M_R}\sum_{h\in H_R}h$, one can check that $S^{(n)}_R=(M_R-k_R)\log2$ where $k_R$ is the number of independent generators of $H_R$, i.e. $|H_R|=2^{k_R}$. The mutual information between two regions $R_1$ and $R_2$ is therefore given by $(k_{R_1\cup R_2}-k_{R_1}-k_{R_2})\log2$, independent on the Renyi index $n$. In our case, $R_1=A_1\cup A_1'$ and $R_2=A_3\cup A_3'$ are two annuli on the torus. $H_{R_1\cup R_2}$ has two more generators than $H_{R_1}H_{R_2}$: We can take the first (second) generator as the product of two noncontractible $X$-loop ($Z$-loop) operators in $R_1$ and $R_2$, respectively. 
We thus find $I^{(n)}(R_1,R_2)=2\log2$ for all $n$. This is indeed consistent with the fact that toric code is an abelian TO with $\mc D=2$. 

For general quantum double models, we find more interesting results of the mutual information, revealing nontrivial quantum dimensions. We refer interested readers to the Supplementary Information for details.

\section*{Discussion}
In this work, we have introduced a simple protocol for extracting all anyon quantum dimensions of a two-dimensional TO from an arbitrary ground state wave function. 
It is both validated in the continuum and verified on lattices. It is interesting to seek generalizations of this protocol for extracting more universal information, such as the fusion rules, $\mc S$ matrix, and topological spins. 

We should mention that this work is partially inspired by Ref.\,\onlinecite{Ryu2023TripartitionDiagrammatic}, which studies the entanglement negativity between two spatial regions in a tripartite topologically ordered state with trisection points (points where the three regions meet). Using some ``wormhole'' approach, it is found that the negativity contains an order-$1$ term that can distinguish abelian and nonabelian TOs. However, it is not clear at least to us whether this term is comparable to any universal quantity in generic models. 
It actually seems hard to extract a universal term from the entanglement negativity with trisection points \cite{LiuCC2022IQHNeg,Ryu2023Trisection,YinLiu2023Ent}, and more studies are needed to better understand this issue. 

Finally, we note that this work is still not totally satisfactory: Our protocol has only been checked using fixed-point models, either in the continuum or on lattices. Future numerical simulations are needed to further verify this protocol in the presence of perturbations. 

\section*{Data Availability}
This research is analytical; there is no numerical or experimental data. Part of the analytical derivations are provided in the Supplementary Information file. 

\section*{Code Availability}
Not applicable. 

\bibliography{Bib_Annulus.bib}

%apsrev4-2.bst 2019-01-14 (MD) hand-edited version of apsrev4-1.bst
%Control: key (0)
%Control: author (8) initials jnrlst
%Control: editor formatted (1) identically to author
%Control: production of article title (0) allowed
%Control: page (0) single
%Control: year (1) truncated
%Control: production of eprint (0) enabled
\begin{thebibliography}{56}%
\makeatletter
\providecommand \@ifxundefined [1]{%
 \@ifx{#1\undefined}
}%
\providecommand \@ifnum [1]{%
 \ifnum #1\expandafter \@firstoftwo
 \else \expandafter \@secondoftwo
 \fi
}%
\providecommand \@ifx [1]{%
 \ifx #1\expandafter \@firstoftwo
 \else \expandafter \@secondoftwo
 \fi
}%
\providecommand \natexlab [1]{#1}%
\providecommand \enquote  [1]{``#1''}%
\providecommand \bibnamefont  [1]{#1}%
\providecommand \bibfnamefont [1]{#1}%
\providecommand \citenamefont [1]{#1}%
\providecommand \href@noop [0]{\@secondoftwo}%
\providecommand \href [0]{\begingroup \@sanitize@url \@href}%
\providecommand \@href[1]{\@@startlink{#1}\@@href}%
\providecommand \@@href[1]{\endgroup#1\@@endlink}%
\providecommand \@sanitize@url [0]{\catcode `\\12\catcode `\$12\catcode
  `\&12\catcode `\#12\catcode `\^12\catcode `\_12\catcode `\%12\relax}%
\providecommand \@@startlink[1]{}%
\providecommand \@@endlink[0]{}%
\providecommand \url  [0]{\begingroup\@sanitize@url \@url }%
\providecommand \@url [1]{\endgroup\@href {#1}{\urlprefix }}%
\providecommand \urlprefix  [0]{URL }%
\providecommand \Eprint [0]{\href }%
\providecommand \doibase [0]{https://doi.org/}%
\providecommand \selectlanguage [0]{\@gobble}%
\providecommand \bibinfo  [0]{\@secondoftwo}%
\providecommand \bibfield  [0]{\@secondoftwo}%
\providecommand \translation [1]{[#1]}%
\providecommand \BibitemOpen [0]{}%
\providecommand \bibitemStop [0]{}%
\providecommand \bibitemNoStop [0]{.\EOS\space}%
\providecommand \EOS [0]{\spacefactor3000\relax}%
\providecommand \BibitemShut  [1]{\csname bibitem#1\endcsname}%
\let\auto@bib@innerbib\@empty
%</preamble>
\bibitem [{\citenamefont {Wen}(2007)}]{WenBook}%
  \BibitemOpen
  \bibfield  {author} {\bibinfo {author} {\bibfnamefont {X.-G.}\ \bibnamefont
  {Wen}},\ }\href {https://doi.org/10.1093/acprof:oso/9780199227259.001.0001}
  {\emph {\bibinfo {title} {{Quantum Field Theory of Many-Body Systems: From
  the Origin of Sound to an Origin of Light and Electrons}}}}\ (\bibinfo
  {publisher} {Oxford University Press},\ \bibinfo {year} {2007})\BibitemShut
  {NoStop}%
\bibitem [{\citenamefont {{Kitaev}}(2003)}]{Kitaev2003ToricCode}%
  \BibitemOpen
  \bibfield  {author} {\bibinfo {author} {\bibfnamefont {A.~Y.}\ \bibnamefont
  {{Kitaev}}},\ }\bibfield  {title} {\bibinfo {title} {{Fault-tolerant quantum
  computation by anyons}},\ }\href
  {https://doi.org/10.1016/S0003-4916(02)00018-0} {\bibfield  {journal}
  {\bibinfo  {journal} {Annals of Physics}\ }\textbf {\bibinfo {volume}
  {303}},\ \bibinfo {pages} {2} (\bibinfo {year} {2003})},\ \Eprint
  {https://arxiv.org/abs/quant-ph/9707021} {arXiv:quant-ph/9707021 [quant-ph]}
  \BibitemShut {NoStop}%
\bibitem [{\citenamefont {{Nayak}}\ \emph {et~al.}(2008)\citenamefont
  {{Nayak}}, \citenamefont {{Simon}}, \citenamefont {{Stern}}, \citenamefont
  {{Freedman}},\ and\ \citenamefont {{Das Sarma}}}]{Nayak2008RMP}%
  \BibitemOpen
  \bibfield  {author} {\bibinfo {author} {\bibfnamefont {C.}~\bibnamefont
  {{Nayak}}}, \bibinfo {author} {\bibfnamefont {S.~H.}\ \bibnamefont
  {{Simon}}}, \bibinfo {author} {\bibfnamefont {A.}~\bibnamefont {{Stern}}},
  \bibinfo {author} {\bibfnamefont {M.}~\bibnamefont {{Freedman}}},\ and\
  \bibinfo {author} {\bibfnamefont {S.}~\bibnamefont {{Das Sarma}}},\
  }\bibfield  {title} {\bibinfo {title} {{Non-Abelian anyons and topological
  quantum computation}},\ }\href {https://doi.org/10.1103/RevModPhys.80.1083}
  {\bibfield  {journal} {\bibinfo  {journal} {Reviews of Modern Physics}\
  }\textbf {\bibinfo {volume} {80}},\ \bibinfo {pages} {1083} (\bibinfo {year}
  {2008})},\ \Eprint {https://arxiv.org/abs/0707.1889} {arXiv:0707.1889
  [cond-mat.str-el]} \BibitemShut {NoStop}%
\bibitem [{\citenamefont {{Samajdar}}\ \emph {et~al.}(2021)\citenamefont
  {{Samajdar}}, \citenamefont {{Ho}}, \citenamefont {{Pichler}}, \citenamefont
  {{Lukin}},\ and\ \citenamefont {{Sachdev}}}]{Rhine2021KagomeRydberg}%
  \BibitemOpen
  \bibfield  {author} {\bibinfo {author} {\bibfnamefont {R.}~\bibnamefont
  {{Samajdar}}}, \bibinfo {author} {\bibfnamefont {W.~W.}\ \bibnamefont
  {{Ho}}}, \bibinfo {author} {\bibfnamefont {H.}~\bibnamefont {{Pichler}}},
  \bibinfo {author} {\bibfnamefont {M.~D.}\ \bibnamefont {{Lukin}}},\ and\
  \bibinfo {author} {\bibfnamefont {S.}~\bibnamefont {{Sachdev}}},\ }\bibfield
  {title} {\bibinfo {title} {{Quantum phases of Rydberg atoms on a kagome
  lattice}},\ }\href {https://doi.org/10.1073/pnas.2015785118} {\bibfield
  {journal} {\bibinfo  {journal} {Proceedings of the National Academy of
  Science}\ }\textbf {\bibinfo {volume} {118}},\ \bibinfo {eid} {e2015785118}
  (\bibinfo {year} {2021})},\ \Eprint {https://arxiv.org/abs/2011.12295}
  {arXiv:2011.12295 [cond-mat.quant-gas]} \BibitemShut {NoStop}%
\bibitem [{\citenamefont {{Verresen}}\ \emph
  {et~al.}(2021{\natexlab{a}})\citenamefont {{Verresen}}, \citenamefont
  {{Lukin}},\ and\ \citenamefont {{Vishwanath}}}]{Ruben2021RubyRydberg}%
  \BibitemOpen
  \bibfield  {author} {\bibinfo {author} {\bibfnamefont {R.}~\bibnamefont
  {{Verresen}}}, \bibinfo {author} {\bibfnamefont {M.~D.}\ \bibnamefont
  {{Lukin}}},\ and\ \bibinfo {author} {\bibfnamefont {A.}~\bibnamefont
  {{Vishwanath}}},\ }\bibfield  {title} {\bibinfo {title} {{Prediction of Toric
  Code Topological Order from Rydberg Blockade}},\ }\href
  {https://doi.org/10.1103/PhysRevX.11.031005} {\bibfield  {journal} {\bibinfo
  {journal} {Physical Review X}\ }\textbf {\bibinfo {volume} {11}},\ \bibinfo
  {eid} {031005} (\bibinfo {year} {2021}{\natexlab{a}})},\ \Eprint
  {https://arxiv.org/abs/2011.12310} {arXiv:2011.12310 [cond-mat.str-el]}
  \BibitemShut {NoStop}%
\bibitem [{\citenamefont {{Satzinger}}\ \emph {et~al.}(2021)\citenamefont
  {{Satzinger}}, \citenamefont {{Liu}}, \citenamefont {{Smith}}, \citenamefont
  {{Knapp}}, \citenamefont {{Newman}}, \citenamefont {{Jones}}, \citenamefont
  {{Chen}}, \citenamefont {{Quintana}}, \citenamefont {{Mi}}, \citenamefont
  {{Dunsworth}}, \citenamefont {{Gidney}}, \citenamefont {{Aleiner}},
  \citenamefont {{Arute}}, \citenamefont {{Arya}}, \citenamefont {{Atalaya}}
  \emph {et~al.}}]{TC_SC_2021}%
  \BibitemOpen
  \bibfield  {author} {\bibinfo {author} {\bibfnamefont {K.~J.}\ \bibnamefont
  {{Satzinger}}}, \bibinfo {author} {\bibfnamefont {Y.~J.}\ \bibnamefont
  {{Liu}}}, \bibinfo {author} {\bibfnamefont {A.}~\bibnamefont {{Smith}}},
  \bibinfo {author} {\bibfnamefont {C.}~\bibnamefont {{Knapp}}}, \bibinfo
  {author} {\bibfnamefont {M.}~\bibnamefont {{Newman}}}, \bibinfo {author}
  {\bibfnamefont {C.}~\bibnamefont {{Jones}}}, \bibinfo {author} {\bibfnamefont
  {Z.}~\bibnamefont {{Chen}}}, \bibinfo {author} {\bibfnamefont
  {C.}~\bibnamefont {{Quintana}}}, \bibinfo {author} {\bibfnamefont
  {X.}~\bibnamefont {{Mi}}}, \bibinfo {author} {\bibfnamefont {A.}~\bibnamefont
  {{Dunsworth}}}, \bibinfo {author} {\bibfnamefont {C.}~\bibnamefont
  {{Gidney}}}, \bibinfo {author} {\bibfnamefont {I.}~\bibnamefont {{Aleiner}}},
  \bibinfo {author} {\bibfnamefont {F.}~\bibnamefont {{Arute}}}, \bibinfo
  {author} {\bibfnamefont {K.}~\bibnamefont {{Arya}}}, \bibinfo {author}
  {\bibfnamefont {J.}~\bibnamefont {{Atalaya}}}, \emph {et~al.},\ }\bibfield
  {title} {\bibinfo {title} {{Realizing topologically ordered states on a
  quantum processor}},\ }\href {https://doi.org/10.1126/science.abi8378}
  {\bibfield  {journal} {\bibinfo  {journal} {Science}\ }\textbf {\bibinfo
  {volume} {374}},\ \bibinfo {pages} {1237} (\bibinfo {year} {2021})},\ \Eprint
  {https://arxiv.org/abs/2104.01180} {arXiv:2104.01180 [quant-ph]} \BibitemShut
  {NoStop}%
\bibitem [{\citenamefont {{Semeghini}}\ \emph {et~al.}(2021)\citenamefont
  {{Semeghini}}, \citenamefont {{Levine}}, \citenamefont {{Keesling}},
  \citenamefont {{Ebadi}}, \citenamefont {{Wang}}, \citenamefont {{Bluvstein}},
  \citenamefont {{Verresen}}, \citenamefont {{Pichler}}, \citenamefont
  {{Kalinowski}}, \citenamefont {{Samajdar}}, \citenamefont {{Omran}},
  \citenamefont {{Sachdev}}, \citenamefont {{Vishwanath}}, \citenamefont
  {{Greiner}}, \citenamefont {{Vuleti{\'c}}},\ and\ \citenamefont
  {{Lukin}}}]{TC_Rydberg_2021}%
  \BibitemOpen
  \bibfield  {author} {\bibinfo {author} {\bibfnamefont {G.}~\bibnamefont
  {{Semeghini}}}, \bibinfo {author} {\bibfnamefont {H.}~\bibnamefont
  {{Levine}}}, \bibinfo {author} {\bibfnamefont {A.}~\bibnamefont
  {{Keesling}}}, \bibinfo {author} {\bibfnamefont {S.}~\bibnamefont {{Ebadi}}},
  \bibinfo {author} {\bibfnamefont {T.~T.}\ \bibnamefont {{Wang}}}, \bibinfo
  {author} {\bibfnamefont {D.}~\bibnamefont {{Bluvstein}}}, \bibinfo {author}
  {\bibfnamefont {R.}~\bibnamefont {{Verresen}}}, \bibinfo {author}
  {\bibfnamefont {H.}~\bibnamefont {{Pichler}}}, \bibinfo {author}
  {\bibfnamefont {M.}~\bibnamefont {{Kalinowski}}}, \bibinfo {author}
  {\bibfnamefont {R.}~\bibnamefont {{Samajdar}}}, \bibinfo {author}
  {\bibfnamefont {A.}~\bibnamefont {{Omran}}}, \bibinfo {author} {\bibfnamefont
  {S.}~\bibnamefont {{Sachdev}}}, \bibinfo {author} {\bibfnamefont
  {A.}~\bibnamefont {{Vishwanath}}}, \bibinfo {author} {\bibfnamefont
  {M.}~\bibnamefont {{Greiner}}}, \bibinfo {author} {\bibfnamefont
  {V.}~\bibnamefont {{Vuleti{\'c}}}},\ and\ \bibinfo {author} {\bibfnamefont
  {M.~D.}\ \bibnamefont {{Lukin}}},\ }\bibfield  {title} {\bibinfo {title}
  {{Probing topological spin liquids on a programmable quantum simulator}},\
  }\href {https://doi.org/10.1126/science.abi8794} {\bibfield  {journal}
  {\bibinfo  {journal} {Science}\ }\textbf {\bibinfo {volume} {374}},\ \bibinfo
  {pages} {1242} (\bibinfo {year} {2021})},\ \Eprint
  {https://arxiv.org/abs/2104.04119} {arXiv:2104.04119 [quant-ph]} \BibitemShut
  {NoStop}%
\bibitem [{\citenamefont {{Tantivasadakarn}}\ \emph {et~al.}(2021)\citenamefont
  {{Tantivasadakarn}}, \citenamefont {{Thorngren}}, \citenamefont
  {{Vishwanath}},\ and\ \citenamefont {{Verresen}}}]{Nat2021KW}%
  \BibitemOpen
  \bibfield  {author} {\bibinfo {author} {\bibfnamefont {N.}~\bibnamefont
  {{Tantivasadakarn}}}, \bibinfo {author} {\bibfnamefont {R.}~\bibnamefont
  {{Thorngren}}}, \bibinfo {author} {\bibfnamefont {A.}~\bibnamefont
  {{Vishwanath}}},\ and\ \bibinfo {author} {\bibfnamefont {R.}~\bibnamefont
  {{Verresen}}},\ }\bibfield  {title} {\bibinfo {title} {{Long-range
  entanglement from measuring symmetry-protected topological phases}},\ }\href
  {https://doi.org/10.48550/arXiv.2112.01519} {\bibfield  {journal} {\bibinfo
  {journal} {arXiv e-prints}\ ,\ \bibinfo {eid} {arXiv:2112.01519}} (\bibinfo
  {year} {2021})},\ \Eprint {https://arxiv.org/abs/2112.01519}
  {arXiv:2112.01519 [cond-mat.str-el]} \BibitemShut {NoStop}%
\bibitem [{\citenamefont {{Verresen}}\ \emph
  {et~al.}(2021{\natexlab{b}})\citenamefont {{Verresen}}, \citenamefont
  {{Tantivasadakarn}},\ and\ \citenamefont
  {{Vishwanath}}}]{Nat2021EfficientTO}%
  \BibitemOpen
  \bibfield  {author} {\bibinfo {author} {\bibfnamefont {R.}~\bibnamefont
  {{Verresen}}}, \bibinfo {author} {\bibfnamefont {N.}~\bibnamefont
  {{Tantivasadakarn}}},\ and\ \bibinfo {author} {\bibfnamefont
  {A.}~\bibnamefont {{Vishwanath}}},\ }\bibfield  {title} {\bibinfo {title}
  {{Efficiently preparing Schr{\"o}dinger's cat, fractons and non-Abelian
  topological order in quantum devices}},\ }\href
  {https://doi.org/10.48550/arXiv.2112.03061} {\bibfield  {journal} {\bibinfo
  {journal} {arXiv e-prints}\ ,\ \bibinfo {eid} {arXiv:2112.03061}} (\bibinfo
  {year} {2021}{\natexlab{b}})},\ \Eprint {https://arxiv.org/abs/2112.03061}
  {arXiv:2112.03061 [quant-ph]} \BibitemShut {NoStop}%
\bibitem [{\citenamefont {{Tantivasadakarn}}\ \emph {et~al.}(2023)\citenamefont
  {{Tantivasadakarn}}, \citenamefont {{Verresen}},\ and\ \citenamefont
  {{Vishwanath}}}]{Nat2022Shortest}%
  \BibitemOpen
  \bibfield  {author} {\bibinfo {author} {\bibfnamefont {N.}~\bibnamefont
  {{Tantivasadakarn}}}, \bibinfo {author} {\bibfnamefont {R.}~\bibnamefont
  {{Verresen}}},\ and\ \bibinfo {author} {\bibfnamefont {A.}~\bibnamefont
  {{Vishwanath}}},\ }\bibfield  {title} {\bibinfo {title} {{Shortest Route to
  Non-Abelian Topological Order on a Quantum Processor}},\ }\href
  {https://doi.org/10.1103/PhysRevLett.131.060405} {\bibfield  {journal}
  {\bibinfo  {journal} {\prl}\ }\textbf {\bibinfo {volume} {131}},\ \bibinfo
  {eid} {060405} (\bibinfo {year} {2023})},\ \Eprint
  {https://arxiv.org/abs/2209.03964} {arXiv:2209.03964 [quant-ph]} \BibitemShut
  {NoStop}%
\bibitem [{\citenamefont {{Iqbal}}\ \emph {et~al.}(2023)\citenamefont
  {{Iqbal}}, \citenamefont {{Tantivasadakarn}}, \citenamefont {{Gatterman}},
  \citenamefont {{Gerber}}, \citenamefont {{Gilmore}}, \citenamefont {{Gresh}},
  \citenamefont {{Hankin}}, \citenamefont {{Hewitt}}, \citenamefont {{Horst}},
  \citenamefont {{Matheny}}, \citenamefont {{Mengle}}, \citenamefont
  {{Neyenhuis}}, \citenamefont {{Vishwanath}}, \citenamefont {{Foss-Feig}},
  \citenamefont {{Verresen}},\ and\ \citenamefont {{Dreyer}}}]{Dreyer2023TC}%
  \BibitemOpen
  \bibfield  {author} {\bibinfo {author} {\bibfnamefont {M.}~\bibnamefont
  {{Iqbal}}}, \bibinfo {author} {\bibfnamefont {N.}~\bibnamefont
  {{Tantivasadakarn}}}, \bibinfo {author} {\bibfnamefont {T.~M.}\ \bibnamefont
  {{Gatterman}}}, \bibinfo {author} {\bibfnamefont {J.~A.}\ \bibnamefont
  {{Gerber}}}, \bibinfo {author} {\bibfnamefont {K.}~\bibnamefont {{Gilmore}}},
  \bibinfo {author} {\bibfnamefont {D.}~\bibnamefont {{Gresh}}}, \bibinfo
  {author} {\bibfnamefont {A.}~\bibnamefont {{Hankin}}}, \bibinfo {author}
  {\bibfnamefont {N.}~\bibnamefont {{Hewitt}}}, \bibinfo {author}
  {\bibfnamefont {C.~V.}\ \bibnamefont {{Horst}}}, \bibinfo {author}
  {\bibfnamefont {M.}~\bibnamefont {{Matheny}}}, \bibinfo {author}
  {\bibfnamefont {T.}~\bibnamefont {{Mengle}}}, \bibinfo {author}
  {\bibfnamefont {B.}~\bibnamefont {{Neyenhuis}}}, \bibinfo {author}
  {\bibfnamefont {A.}~\bibnamefont {{Vishwanath}}}, \bibinfo {author}
  {\bibfnamefont {M.}~\bibnamefont {{Foss-Feig}}}, \bibinfo {author}
  {\bibfnamefont {R.}~\bibnamefont {{Verresen}}},\ and\ \bibinfo {author}
  {\bibfnamefont {H.}~\bibnamefont {{Dreyer}}},\ }\bibfield  {title} {\bibinfo
  {title} {{Topological Order from Measurements and Feed-Forward on a Trapped
  Ion Quantum Computer}},\ }\href {https://doi.org/10.48550/arXiv.2302.01917}
  {\bibfield  {journal} {\bibinfo  {journal} {arXiv e-prints}\ ,\ \bibinfo
  {eid} {arXiv:2302.01917}} (\bibinfo {year} {2023})},\ \Eprint
  {https://arxiv.org/abs/2302.01917} {arXiv:2302.01917 [quant-ph]} \BibitemShut
  {NoStop}%
\bibitem [{\citenamefont {{Iqbal}}\ \emph {et~al.}(2024)\citenamefont
  {{Iqbal}}, \citenamefont {{Tantivasadakarn}}, \citenamefont {{Verresen}},
  \citenamefont {{Campbell}}, \citenamefont {{Dreiling}}, \citenamefont
  {{Figgatt}}, \citenamefont {{Gaebler}}, \citenamefont {{Johansen}},
  \citenamefont {{Mills}}, \citenamefont {{Moses}}, \citenamefont {{Pino}},
  \citenamefont {{Ransford}}, \citenamefont {{Rowe}}, \citenamefont
  {{Siegfried}}, \citenamefont {{Stutz}}, \citenamefont {{Foss-Feig}},
  \citenamefont {{Vishwanath}},\ and\ \citenamefont {{Dreyer}}}]{Dreyer2023D4}%
  \BibitemOpen
  \bibfield  {author} {\bibinfo {author} {\bibfnamefont {M.}~\bibnamefont
  {{Iqbal}}}, \bibinfo {author} {\bibfnamefont {N.}~\bibnamefont
  {{Tantivasadakarn}}}, \bibinfo {author} {\bibfnamefont {R.}~\bibnamefont
  {{Verresen}}}, \bibinfo {author} {\bibfnamefont {S.~L.}\ \bibnamefont
  {{Campbell}}}, \bibinfo {author} {\bibfnamefont {J.~M.}\ \bibnamefont
  {{Dreiling}}}, \bibinfo {author} {\bibfnamefont {C.}~\bibnamefont
  {{Figgatt}}}, \bibinfo {author} {\bibfnamefont {J.~P.}\ \bibnamefont
  {{Gaebler}}}, \bibinfo {author} {\bibfnamefont {J.}~\bibnamefont
  {{Johansen}}}, \bibinfo {author} {\bibfnamefont {M.}~\bibnamefont {{Mills}}},
  \bibinfo {author} {\bibfnamefont {S.~A.}\ \bibnamefont {{Moses}}}, \bibinfo
  {author} {\bibfnamefont {J.~M.}\ \bibnamefont {{Pino}}}, \bibinfo {author}
  {\bibfnamefont {A.}~\bibnamefont {{Ransford}}}, \bibinfo {author}
  {\bibfnamefont {M.}~\bibnamefont {{Rowe}}}, \bibinfo {author} {\bibfnamefont
  {P.}~\bibnamefont {{Siegfried}}}, \bibinfo {author} {\bibfnamefont {R.~P.}\
  \bibnamefont {{Stutz}}}, \bibinfo {author} {\bibfnamefont {M.}~\bibnamefont
  {{Foss-Feig}}}, \bibinfo {author} {\bibfnamefont {A.}~\bibnamefont
  {{Vishwanath}}},\ and\ \bibinfo {author} {\bibfnamefont {H.}~\bibnamefont
  {{Dreyer}}},\ }\bibfield  {title} {\bibinfo {title} {{Non-Abelian topological
  order and anyons on a trapped-ion processor}},\ }\href
  {https://doi.org/10.1038/s41586-023-06934-4} {\bibfield  {journal} {\bibinfo
  {journal} {\nat}\ }\textbf {\bibinfo {volume} {626}},\ \bibinfo {pages} {505}
  (\bibinfo {year} {2024})},\ \Eprint {https://arxiv.org/abs/2305.03766}
  {arXiv:2305.03766 [quant-ph]} \BibitemShut {NoStop}%
\bibitem [{\citenamefont {{Hamma}}\ \emph
  {et~al.}(2005{\natexlab{a}})\citenamefont {{Hamma}}, \citenamefont
  {{Ionicioiu}},\ and\ \citenamefont {{Zanardi}}}]{Hamma2005TCEE1}%
  \BibitemOpen
  \bibfield  {author} {\bibinfo {author} {\bibfnamefont {A.}~\bibnamefont
  {{Hamma}}}, \bibinfo {author} {\bibfnamefont {R.}~\bibnamefont
  {{Ionicioiu}}},\ and\ \bibinfo {author} {\bibfnamefont {P.}~\bibnamefont
  {{Zanardi}}},\ }\bibfield  {title} {\bibinfo {title} {{Ground state
  entanglement and geometric entropy in the Kitaev model [rapid
  communication]}},\ }\href {https://doi.org/10.1016/j.physleta.2005.01.060}
  {\bibfield  {journal} {\bibinfo  {journal} {Physics Letters A}\ }\textbf
  {\bibinfo {volume} {337}},\ \bibinfo {pages} {22} (\bibinfo {year}
  {2005}{\natexlab{a}})},\ \Eprint {https://arxiv.org/abs/quant-ph/0406202}
  {arXiv:quant-ph/0406202 [quant-ph]} \BibitemShut {NoStop}%
\bibitem [{\citenamefont {{Hamma}}\ \emph
  {et~al.}(2005{\natexlab{b}})\citenamefont {{Hamma}}, \citenamefont
  {{Ionicioiu}},\ and\ \citenamefont {{Zanardi}}}]{Hamma2005TCEE2}%
  \BibitemOpen
  \bibfield  {author} {\bibinfo {author} {\bibfnamefont {A.}~\bibnamefont
  {{Hamma}}}, \bibinfo {author} {\bibfnamefont {R.}~\bibnamefont
  {{Ionicioiu}}},\ and\ \bibinfo {author} {\bibfnamefont {P.}~\bibnamefont
  {{Zanardi}}},\ }\bibfield  {title} {\bibinfo {title} {{Bipartite entanglement
  and entropic boundary law in lattice spin systems}},\ }\href
  {https://doi.org/10.1103/PhysRevA.71.022315} {\bibfield  {journal} {\bibinfo
  {journal} {\pra}\ }\textbf {\bibinfo {volume} {71}},\ \bibinfo {eid} {022315}
  (\bibinfo {year} {2005}{\natexlab{b}})},\ \Eprint
  {https://arxiv.org/abs/quant-ph/0409073} {arXiv:quant-ph/0409073 [quant-ph]}
  \BibitemShut {NoStop}%
\bibitem [{\citenamefont {{Kitaev}}\ and\ \citenamefont
  {{Preskill}}(2006)}]{Kitaev2006TEE}%
  \BibitemOpen
  \bibfield  {author} {\bibinfo {author} {\bibfnamefont {A.}~\bibnamefont
  {{Kitaev}}}\ and\ \bibinfo {author} {\bibfnamefont {J.}~\bibnamefont
  {{Preskill}}},\ }\bibfield  {title} {\bibinfo {title} {{Topological
  Entanglement Entropy}},\ }\href
  {https://doi.org/10.1103/PhysRevLett.96.110404} {\bibfield  {journal}
  {\bibinfo  {journal} {\prl}\ }\textbf {\bibinfo {volume} {96}},\ \bibinfo
  {eid} {110404} (\bibinfo {year} {2006})},\ \Eprint
  {https://arxiv.org/abs/hep-th/0510092} {arXiv:hep-th/0510092 [hep-th]}
  \BibitemShut {NoStop}%
\bibitem [{\citenamefont {{Levin}}\ and\ \citenamefont
  {{Wen}}(2006)}]{Levin2006TEE}%
  \BibitemOpen
  \bibfield  {author} {\bibinfo {author} {\bibfnamefont {M.}~\bibnamefont
  {{Levin}}}\ and\ \bibinfo {author} {\bibfnamefont {X.-G.}\ \bibnamefont
  {{Wen}}},\ }\bibfield  {title} {\bibinfo {title} {{Detecting Topological
  Order in a Ground State Wave Function}},\ }\href
  {https://doi.org/10.1103/PhysRevLett.96.110405} {\bibfield  {journal}
  {\bibinfo  {journal} {\prl}\ }\textbf {\bibinfo {volume} {96}},\ \bibinfo
  {eid} {110405} (\bibinfo {year} {2006})},\ \Eprint
  {https://arxiv.org/abs/cond-mat/0510613} {arXiv:cond-mat/0510613
  [cond-mat.str-el]} \BibitemShut {NoStop}%
\bibitem [{\citenamefont {{Kane}}\ and\ \citenamefont
  {{Fisher}}(1997)}]{Kane1997ThermalTransport}%
  \BibitemOpen
  \bibfield  {author} {\bibinfo {author} {\bibfnamefont {C.~L.}\ \bibnamefont
  {{Kane}}}\ and\ \bibinfo {author} {\bibfnamefont {M.~P.~A.}\ \bibnamefont
  {{Fisher}}},\ }\bibfield  {title} {\bibinfo {title} {{Quantized thermal
  transport in the fractional quantum Hall effect}},\ }\href
  {https://doi.org/10.1103/PhysRevB.55.15832} {\bibfield  {journal} {\bibinfo
  {journal} {\prb}\ }\textbf {\bibinfo {volume} {55}},\ \bibinfo {pages}
  {15832} (\bibinfo {year} {1997})},\ \Eprint
  {https://arxiv.org/abs/cond-mat/9603118} {arXiv:cond-mat/9603118 [cond-mat]}
  \BibitemShut {NoStop}%
\bibitem [{\citenamefont {{Cappelli}}\ \emph {et~al.}(2002)\citenamefont
  {{Cappelli}}, \citenamefont {{Huerta}},\ and\ \citenamefont
  {{Zemba}}}]{Cappelli2002ThermalConductance}%
  \BibitemOpen
  \bibfield  {author} {\bibinfo {author} {\bibfnamefont {A.}~\bibnamefont
  {{Cappelli}}}, \bibinfo {author} {\bibfnamefont {M.}~\bibnamefont
  {{Huerta}}},\ and\ \bibinfo {author} {\bibfnamefont {G.~R.}\ \bibnamefont
  {{Zemba}}},\ }\bibfield  {title} {\bibinfo {title} {{Thermal transport in
  chiral conformal theories and hierarchical quantum Hall states}},\ }\href
  {https://doi.org/10.1016/S0550-3213(02)00340-1} {\bibfield  {journal}
  {\bibinfo  {journal} {Nuclear Physics B}\ }\textbf {\bibinfo {volume}
  {636}},\ \bibinfo {pages} {568} (\bibinfo {year} {2002})},\ \Eprint
  {https://arxiv.org/abs/cond-mat/0111437} {arXiv:cond-mat/0111437
  [cond-mat.mes-hall]} \BibitemShut {NoStop}%
\bibitem [{\citenamefont {{Kitaev}}(2006)}]{Kitaev2006Honeycomb}%
  \BibitemOpen
  \bibfield  {author} {\bibinfo {author} {\bibfnamefont {A.}~\bibnamefont
  {{Kitaev}}},\ }\bibfield  {title} {\bibinfo {title} {{Anyons in an exactly
  solved model and beyond}},\ }\href
  {https://doi.org/10.1016/j.aop.2005.10.005} {\bibfield  {journal} {\bibinfo
  {journal} {Annals of Physics}\ }\textbf {\bibinfo {volume} {321}},\ \bibinfo
  {pages} {2} (\bibinfo {year} {2006})},\ \Eprint
  {https://arxiv.org/abs/cond-mat/0506438} {arXiv:cond-mat/0506438
  [cond-mat.mes-hall]} \BibitemShut {NoStop}%
\bibitem [{\citenamefont {{Kim}}\ \emph
  {et~al.}(2022{\natexlab{a}})\citenamefont {{Kim}}, \citenamefont {{Shi}},
  \citenamefont {{Kato}},\ and\ \citenamefont
  {{Albert}}}]{Kim2022ChiralCentralCharge}%
  \BibitemOpen
  \bibfield  {author} {\bibinfo {author} {\bibfnamefont {I.~H.}\ \bibnamefont
  {{Kim}}}, \bibinfo {author} {\bibfnamefont {B.}~\bibnamefont {{Shi}}},
  \bibinfo {author} {\bibfnamefont {K.}~\bibnamefont {{Kato}}},\ and\ \bibinfo
  {author} {\bibfnamefont {V.~V.}\ \bibnamefont {{Albert}}},\ }\bibfield
  {title} {\bibinfo {title} {{Chiral Central Charge from a Single Bulk Wave
  Function}},\ }\href {https://doi.org/10.1103/PhysRevLett.128.176402}
  {\bibfield  {journal} {\bibinfo  {journal} {\prl}\ }\textbf {\bibinfo
  {volume} {128}},\ \bibinfo {eid} {176402} (\bibinfo {year}
  {2022}{\natexlab{a}})},\ \Eprint {https://arxiv.org/abs/2110.06932}
  {arXiv:2110.06932 [quant-ph]} \BibitemShut {NoStop}%
\bibitem [{\citenamefont {{Kim}}\ \emph
  {et~al.}(2022{\natexlab{b}})\citenamefont {{Kim}}, \citenamefont {{Shi}},
  \citenamefont {{Kato}},\ and\ \citenamefont
  {{Albert}}}]{Kim2022ChiralCCNumerics}%
  \BibitemOpen
  \bibfield  {author} {\bibinfo {author} {\bibfnamefont {I.~H.}\ \bibnamefont
  {{Kim}}}, \bibinfo {author} {\bibfnamefont {B.}~\bibnamefont {{Shi}}},
  \bibinfo {author} {\bibfnamefont {K.}~\bibnamefont {{Kato}}},\ and\ \bibinfo
  {author} {\bibfnamefont {V.~V.}\ \bibnamefont {{Albert}}},\ }\bibfield
  {title} {\bibinfo {title} {{Modular commutator in gapped quantum many-body
  systems}},\ }\href {https://doi.org/10.1103/PhysRevB.106.075147} {\bibfield
  {journal} {\bibinfo  {journal} {\prb}\ }\textbf {\bibinfo {volume} {106}},\
  \bibinfo {eid} {075147} (\bibinfo {year} {2022}{\natexlab{b}})},\ \Eprint
  {https://arxiv.org/abs/2110.10400} {arXiv:2110.10400 [quant-ph]} \BibitemShut
  {NoStop}%
\bibitem [{\citenamefont {{Siva}}\ \emph {et~al.}(2022)\citenamefont {{Siva}},
  \citenamefont {{Zou}}, \citenamefont {{Soejima}}, \citenamefont {{Mong}},\
  and\ \citenamefont {{Zaletel}}}]{Zaletel2022MarkovGap}%
  \BibitemOpen
  \bibfield  {author} {\bibinfo {author} {\bibfnamefont {K.}~\bibnamefont
  {{Siva}}}, \bibinfo {author} {\bibfnamefont {Y.}~\bibnamefont {{Zou}}},
  \bibinfo {author} {\bibfnamefont {T.}~\bibnamefont {{Soejima}}}, \bibinfo
  {author} {\bibfnamefont {R.~S.~K.}\ \bibnamefont {{Mong}}},\ and\ \bibinfo
  {author} {\bibfnamefont {M.~P.}\ \bibnamefont {{Zaletel}}},\ }\bibfield
  {title} {\bibinfo {title} {{Universal tripartite entanglement signature of
  ungappable edge states}},\ }\href
  {https://doi.org/10.1103/PhysRevB.106.L041107} {\bibfield  {journal}
  {\bibinfo  {journal} {\prb}\ }\textbf {\bibinfo {volume} {106}},\ \bibinfo
  {eid} {L041107} (\bibinfo {year} {2022})},\ \Eprint
  {https://arxiv.org/abs/2110.11965} {arXiv:2110.11965 [quant-ph]} \BibitemShut
  {NoStop}%
\bibitem [{\citenamefont {{Fan}}(2022)}]{Fan2022ChiralCentralCharge}%
  \BibitemOpen
  \bibfield  {author} {\bibinfo {author} {\bibfnamefont {R.}~\bibnamefont
  {{Fan}}},\ }\bibfield  {title} {\bibinfo {title} {{From Entanglement
  Generated Dynamics to the Gravitational Anomaly and Chiral Central Charge}},\
  }\href {https://doi.org/10.1103/PhysRevLett.129.260403} {\bibfield  {journal}
  {\bibinfo  {journal} {\prl}\ }\textbf {\bibinfo {volume} {129}},\ \bibinfo
  {eid} {260403} (\bibinfo {year} {2022})},\ \Eprint
  {https://arxiv.org/abs/2206.02823} {arXiv:2206.02823 [cond-mat.str-el]}
  \BibitemShut {NoStop}%
\bibitem [{\citenamefont {{Kobayashi}}\ \emph {et~al.}(2024)\citenamefont
  {{Kobayashi}}, \citenamefont {{Wang}}, \citenamefont {{Soejima}},
  \citenamefont {{Mong}},\ and\ \citenamefont {{Ryu}}}]{Ryu2024HigherCC}%
  \BibitemOpen
  \bibfield  {author} {\bibinfo {author} {\bibfnamefont {R.}~\bibnamefont
  {{Kobayashi}}}, \bibinfo {author} {\bibfnamefont {T.}~\bibnamefont {{Wang}}},
  \bibinfo {author} {\bibfnamefont {T.}~\bibnamefont {{Soejima}}}, \bibinfo
  {author} {\bibfnamefont {R.~S.~K.}\ \bibnamefont {{Mong}}},\ and\ \bibinfo
  {author} {\bibfnamefont {S.}~\bibnamefont {{Ryu}}},\ }\bibfield  {title}
  {\bibinfo {title} {{Extracting Higher Central Charge from a Single Wave
  Function}},\ }\href {https://doi.org/10.1103/PhysRevLett.132.016602}
  {\bibfield  {journal} {\bibinfo  {journal} {\prl}\ }\textbf {\bibinfo
  {volume} {132}},\ \bibinfo {eid} {016602} (\bibinfo {year} {2024})},\ \Eprint
  {https://arxiv.org/abs/2303.04822} {arXiv:2303.04822 [cond-mat.str-el]}
  \BibitemShut {NoStop}%
\bibitem [{\citenamefont {{Cian}}\ \emph {et~al.}(2021)\citenamefont {{Cian}},
  \citenamefont {{Dehghani}}, \citenamefont {{Elben}}, \citenamefont
  {{Vermersch}}, \citenamefont {{Zhu}}, \citenamefont {{Barkeshli}},
  \citenamefont {{Zoller}},\ and\ \citenamefont
  {{Hafezi}}}]{Hafezi2021ChernNumber}%
  \BibitemOpen
  \bibfield  {author} {\bibinfo {author} {\bibfnamefont {Z.-P.}\ \bibnamefont
  {{Cian}}}, \bibinfo {author} {\bibfnamefont {H.}~\bibnamefont {{Dehghani}}},
  \bibinfo {author} {\bibfnamefont {A.}~\bibnamefont {{Elben}}}, \bibinfo
  {author} {\bibfnamefont {B.}~\bibnamefont {{Vermersch}}}, \bibinfo {author}
  {\bibfnamefont {G.}~\bibnamefont {{Zhu}}}, \bibinfo {author} {\bibfnamefont
  {M.}~\bibnamefont {{Barkeshli}}}, \bibinfo {author} {\bibfnamefont
  {P.}~\bibnamefont {{Zoller}}},\ and\ \bibinfo {author} {\bibfnamefont
  {M.}~\bibnamefont {{Hafezi}}},\ }\bibfield  {title} {\bibinfo {title}
  {{Many-Body Chern Number from Statistical Correlations of Randomized
  Measurements}},\ }\href {https://doi.org/10.1103/PhysRevLett.126.050501}
  {\bibfield  {journal} {\bibinfo  {journal} {\prl}\ }\textbf {\bibinfo
  {volume} {126}},\ \bibinfo {eid} {050501} (\bibinfo {year} {2021})},\ \Eprint
  {https://arxiv.org/abs/2005.13543} {arXiv:2005.13543 [quant-ph]} \BibitemShut
  {NoStop}%
\bibitem [{\citenamefont {{Dehghani}}\ \emph {et~al.}(2021)\citenamefont
  {{Dehghani}}, \citenamefont {{Cian}}, \citenamefont {{Hafezi}},\ and\
  \citenamefont {{Barkeshli}}}]{Barkeshli2021ChernNumber}%
  \BibitemOpen
  \bibfield  {author} {\bibinfo {author} {\bibfnamefont {H.}~\bibnamefont
  {{Dehghani}}}, \bibinfo {author} {\bibfnamefont {Z.-P.}\ \bibnamefont
  {{Cian}}}, \bibinfo {author} {\bibfnamefont {M.}~\bibnamefont {{Hafezi}}},\
  and\ \bibinfo {author} {\bibfnamefont {M.}~\bibnamefont {{Barkeshli}}},\
  }\bibfield  {title} {\bibinfo {title} {{Extraction of the many-body Chern
  number from a single wave function}},\ }\href
  {https://doi.org/10.1103/PhysRevB.103.075102} {\bibfield  {journal} {\bibinfo
   {journal} {\prb}\ }\textbf {\bibinfo {volume} {103}},\ \bibinfo {eid}
  {075102} (\bibinfo {year} {2021})},\ \Eprint
  {https://arxiv.org/abs/2005.13677} {arXiv:2005.13677 [cond-mat.str-el]}
  \BibitemShut {NoStop}%
\bibitem [{\citenamefont {{Fan}}\ \emph {et~al.}(2023)\citenamefont {{Fan}},
  \citenamefont {{Sahay}},\ and\ \citenamefont
  {{Vishwanath}}}]{Fan2023HallConductance}%
  \BibitemOpen
  \bibfield  {author} {\bibinfo {author} {\bibfnamefont {R.}~\bibnamefont
  {{Fan}}}, \bibinfo {author} {\bibfnamefont {R.}~\bibnamefont {{Sahay}}},\
  and\ \bibinfo {author} {\bibfnamefont {A.}~\bibnamefont {{Vishwanath}}},\
  }\bibfield  {title} {\bibinfo {title} {{Extracting the Quantum Hall
  Conductance from a Single Bulk Wave Function}},\ }\href
  {https://doi.org/10.1103/PhysRevLett.131.186301} {\bibfield  {journal}
  {\bibinfo  {journal} {\prl}\ }\textbf {\bibinfo {volume} {131}},\ \bibinfo
  {eid} {186301} (\bibinfo {year} {2023})},\ \Eprint
  {https://arxiv.org/abs/2208.11710} {arXiv:2208.11710 [cond-mat.str-el]}
  \BibitemShut {NoStop}%
\bibitem [{\citenamefont {{Dong}}\ \emph {et~al.}(2008)\citenamefont {{Dong}},
  \citenamefont {{Fradkin}}, \citenamefont {{Leigh}},\ and\ \citenamefont
  {{Nowling}}}]{Fradkin2008TEE}%
  \BibitemOpen
  \bibfield  {author} {\bibinfo {author} {\bibfnamefont {S.}~\bibnamefont
  {{Dong}}}, \bibinfo {author} {\bibfnamefont {E.}~\bibnamefont {{Fradkin}}},
  \bibinfo {author} {\bibfnamefont {R.~G.}\ \bibnamefont {{Leigh}}},\ and\
  \bibinfo {author} {\bibfnamefont {S.}~\bibnamefont {{Nowling}}},\ }\bibfield
  {title} {\bibinfo {title} {{Topological entanglement entropy in Chern-Simons
  theories and quantum Hall fluids}},\ }\href
  {https://doi.org/10.1088/1126-6708/2008/05/016} {\bibfield  {journal}
  {\bibinfo  {journal} {Journal of High Energy Physics}\ }\textbf {\bibinfo
  {volume} {2008}},\ \bibinfo {eid} {016} (\bibinfo {year} {2008})},\ \Eprint
  {https://arxiv.org/abs/0802.3231} {arXiv:0802.3231 [hep-th]} \BibitemShut
  {NoStop}%
\bibitem [{\citenamefont {{Zhang}}\ \emph {et~al.}(2012)\citenamefont
  {{Zhang}}, \citenamefont {{Grover}}, \citenamefont {{Turner}}, \citenamefont
  {{Oshikawa}},\ and\ \citenamefont {{Vishwanath}}}]{Vishwanath2012TEE}%
  \BibitemOpen
  \bibfield  {author} {\bibinfo {author} {\bibfnamefont {Y.}~\bibnamefont
  {{Zhang}}}, \bibinfo {author} {\bibfnamefont {T.}~\bibnamefont {{Grover}}},
  \bibinfo {author} {\bibfnamefont {A.}~\bibnamefont {{Turner}}}, \bibinfo
  {author} {\bibfnamefont {M.}~\bibnamefont {{Oshikawa}}},\ and\ \bibinfo
  {author} {\bibfnamefont {A.}~\bibnamefont {{Vishwanath}}},\ }\bibfield
  {title} {\bibinfo {title} {{Quasiparticle statistics and braiding from
  ground-state entanglement}},\ }\href
  {https://doi.org/10.1103/PhysRevB.85.235151} {\bibfield  {journal} {\bibinfo
  {journal} {\prb}\ }\textbf {\bibinfo {volume} {85}},\ \bibinfo {eid} {235151}
  (\bibinfo {year} {2012})},\ \Eprint {https://arxiv.org/abs/1111.2342}
  {arXiv:1111.2342 [cond-mat.str-el]} \BibitemShut {NoStop}%
\bibitem [{\citenamefont {{Mei}}\ and\ \citenamefont
  {{Wen}}(2015)}]{MeiWen2015ModularMats}%
  \BibitemOpen
  \bibfield  {author} {\bibinfo {author} {\bibfnamefont {J.-W.}\ \bibnamefont
  {{Mei}}}\ and\ \bibinfo {author} {\bibfnamefont {X.-G.}\ \bibnamefont
  {{Wen}}},\ }\bibfield  {title} {\bibinfo {title} {{Modular matrices from
  universal wave-function overlaps in Gutzwiller-projected parton wave
  functions}},\ }\href {https://doi.org/10.1103/PhysRevB.91.125123} {\bibfield
  {journal} {\bibinfo  {journal} {\prb}\ }\textbf {\bibinfo {volume} {91}},\
  \bibinfo {eid} {125123} (\bibinfo {year} {2015})},\ \Eprint
  {https://arxiv.org/abs/1409.5427} {arXiv:1409.5427 [cond-mat.str-el]}
  \BibitemShut {NoStop}%
\bibitem [{\citenamefont {{Wen}}\ \emph
  {et~al.}(2016{\natexlab{a}})\citenamefont {{Wen}}, \citenamefont
  {{Matsuura}},\ and\ \citenamefont {{Ryu}}}]{Wen2016NegBdry}%
  \BibitemOpen
  \bibfield  {author} {\bibinfo {author} {\bibfnamefont {X.}~\bibnamefont
  {{Wen}}}, \bibinfo {author} {\bibfnamefont {S.}~\bibnamefont {{Matsuura}}},\
  and\ \bibinfo {author} {\bibfnamefont {S.}~\bibnamefont {{Ryu}}},\ }\bibfield
   {title} {\bibinfo {title} {{Edge theory approach to topological entanglement
  entropy, mutual information, and entanglement negativity in Chern-Simons
  theories}},\ }\href {https://doi.org/10.1103/PhysRevB.93.245140} {\bibfield
  {journal} {\bibinfo  {journal} {\prb}\ }\textbf {\bibinfo {volume} {93}},\
  \bibinfo {eid} {245140} (\bibinfo {year} {2016}{\natexlab{a}})},\ \Eprint
  {https://arxiv.org/abs/1603.08534} {arXiv:1603.08534 [cond-mat.mes-hall]}
  \BibitemShut {NoStop}%
\bibitem [{\citenamefont {{Wen}}\ \emph
  {et~al.}(2016{\natexlab{b}})\citenamefont {{Wen}}, \citenamefont {{Chang}},\
  and\ \citenamefont {{Ryu}}}]{Wen2016NegSurgery}%
  \BibitemOpen
  \bibfield  {author} {\bibinfo {author} {\bibfnamefont {X.}~\bibnamefont
  {{Wen}}}, \bibinfo {author} {\bibfnamefont {P.-Y.}\ \bibnamefont {{Chang}}},\
  and\ \bibinfo {author} {\bibfnamefont {S.}~\bibnamefont {{Ryu}}},\ }\bibfield
   {title} {\bibinfo {title} {{Topological entanglement negativity in
  Chern-Simons theories}},\ }\href {https://doi.org/10.1007/JHEP09(2016)012}
  {\bibfield  {journal} {\bibinfo  {journal} {Journal of High Energy Physics}\
  }\textbf {\bibinfo {volume} {2016}},\ \bibinfo {eid} {12} (\bibinfo {year}
  {2016}{\natexlab{b}})},\ \Eprint {https://arxiv.org/abs/1606.04118}
  {arXiv:1606.04118 [cond-mat.str-el]} \BibitemShut {NoStop}%
\bibitem [{\citenamefont {{Shi}}\ \emph {et~al.}(2020)\citenamefont {{Shi}},
  \citenamefont {{Kato}},\ and\ \citenamefont
  {{Kim}}}]{Kim2020EntBootstrapFusion}%
  \BibitemOpen
  \bibfield  {author} {\bibinfo {author} {\bibfnamefont {B.}~\bibnamefont
  {{Shi}}}, \bibinfo {author} {\bibfnamefont {K.}~\bibnamefont {{Kato}}},\ and\
  \bibinfo {author} {\bibfnamefont {I.~H.}\ \bibnamefont {{Kim}}},\ }\bibfield
  {title} {\bibinfo {title} {{Fusion rules from entanglement}},\ }\href
  {https://doi.org/10.1016/j.aop.2020.168164} {\bibfield  {journal} {\bibinfo
  {journal} {Annals of Physics}\ }\textbf {\bibinfo {volume} {418}},\ \bibinfo
  {eid} {168164} (\bibinfo {year} {2020})},\ \Eprint
  {https://arxiv.org/abs/1906.09376} {arXiv:1906.09376 [cond-mat.str-el]}
  \BibitemShut {NoStop}%
\bibitem [{\citenamefont {{Yin}}\ and\ \citenamefont
  {{Liu}}(2023)}]{YinLiu2023Ent}%
  \BibitemOpen
  \bibfield  {author} {\bibinfo {author} {\bibfnamefont {C.}~\bibnamefont
  {{Yin}}}\ and\ \bibinfo {author} {\bibfnamefont {S.}~\bibnamefont {{Liu}}},\
  }\bibfield  {title} {\bibinfo {title} {{Mixed-state entanglement measures in
  topological order}},\ }\href {https://doi.org/10.1103/PhysRevB.108.035152}
  {\bibfield  {journal} {\bibinfo  {journal} {\prb}\ }\textbf {\bibinfo
  {volume} {108}},\ \bibinfo {eid} {035152} (\bibinfo {year} {2023})},\ \Eprint
  {https://arxiv.org/abs/2301.08207} {arXiv:2301.08207 [cond-mat.str-el]}
  \BibitemShut {NoStop}%
\bibitem [{\citenamefont {{Berta}}\ \emph {et~al.}(2015)\citenamefont
  {{Berta}}, \citenamefont {{Seshadreesan}},\ and\ \citenamefont
  {{Wilde}}}]{Berta2015RenyiCMI}%
  \BibitemOpen
  \bibfield  {author} {\bibinfo {author} {\bibfnamefont {M.}~\bibnamefont
  {{Berta}}}, \bibinfo {author} {\bibfnamefont {K.~P.}\ \bibnamefont
  {{Seshadreesan}}},\ and\ \bibinfo {author} {\bibfnamefont {M.~M.}\
  \bibnamefont {{Wilde}}},\ }\bibfield  {title} {\bibinfo {title} {{R{\'e}nyi
  generalizations of the conditional quantum mutual information}},\ }\href
  {https://doi.org/10.1063/1.4908102} {\bibfield  {journal} {\bibinfo
  {journal} {Journal of Mathematical Physics}\ }\textbf {\bibinfo {volume}
  {56}},\ \bibinfo {eid} {022205} (\bibinfo {year} {2015})},\ \Eprint
  {https://arxiv.org/abs/1403.6102} {arXiv:1403.6102 [quant-ph]} \BibitemShut
  {NoStop}%
\bibitem [{\citenamefont {Witten}(1989)}]{Witten1989CSTheory}%
  \BibitemOpen
  \bibfield  {author} {\bibinfo {author} {\bibfnamefont {E.}~\bibnamefont
  {Witten}},\ }\bibfield  {title} {\bibinfo {title} {{Quantum field theory and
  the Jones polynomial}},\ }\href {https://doi.org/10.1007/BF01217730}
  {\bibfield  {journal} {\bibinfo  {journal} {Communications in Mathematical
  Physics}\ }\textbf {\bibinfo {volume} {121}},\ \bibinfo {pages} {351}
  (\bibinfo {year} {1989})}\BibitemShut {NoStop}%
\bibitem [{\citenamefont {{Cui}}\ \emph {et~al.}(2020)\citenamefont {{Cui}},
  \citenamefont {{Ding}}, \citenamefont {{Han}}, \citenamefont {{Penington}},
  \citenamefont {{Ranard}}, \citenamefont {{Rayhaun}},\ and\ \citenamefont
  {{Shangnan}}}]{Cui2020QuantumDouble}%
  \BibitemOpen
  \bibfield  {author} {\bibinfo {author} {\bibfnamefont {S.~X.}\ \bibnamefont
  {{Cui}}}, \bibinfo {author} {\bibfnamefont {D.}~\bibnamefont {{Ding}}},
  \bibinfo {author} {\bibfnamefont {X.}~\bibnamefont {{Han}}}, \bibinfo
  {author} {\bibfnamefont {G.}~\bibnamefont {{Penington}}}, \bibinfo {author}
  {\bibfnamefont {D.}~\bibnamefont {{Ranard}}}, \bibinfo {author}
  {\bibfnamefont {B.~C.}\ \bibnamefont {{Rayhaun}}},\ and\ \bibinfo {author}
  {\bibfnamefont {Z.}~\bibnamefont {{Shangnan}}},\ }\bibfield  {title}
  {\bibinfo {title} {{Kitaev's quantum double model as an error correcting
  code}},\ }\href {https://doi.org/10.22331/q-2020-09-24-331} {\bibfield
  {journal} {\bibinfo  {journal} {Quantum}\ }\textbf {\bibinfo {volume} {4}},\
  \bibinfo {pages} {331} (\bibinfo {year} {2020})},\ \Eprint
  {https://arxiv.org/abs/1908.02829} {arXiv:1908.02829 [quant-ph]} \BibitemShut
  {NoStop}%
\bibitem [{\citenamefont {{Qi}}\ \emph {et~al.}(2012)\citenamefont {{Qi}},
  \citenamefont {{Katsura}},\ and\ \citenamefont
  {{Ludwig}}}]{Ludwig2012TOEESpectrum}%
  \BibitemOpen
  \bibfield  {author} {\bibinfo {author} {\bibfnamefont {X.-L.}\ \bibnamefont
  {{Qi}}}, \bibinfo {author} {\bibfnamefont {H.}~\bibnamefont {{Katsura}}},\
  and\ \bibinfo {author} {\bibfnamefont {A.~W.~W.}\ \bibnamefont {{Ludwig}}},\
  }\bibfield  {title} {\bibinfo {title} {{General Relationship between the
  Entanglement Spectrum and the Edge State Spectrum of Topological Quantum
  States}},\ }\href {https://doi.org/10.1103/PhysRevLett.108.196402} {\bibfield
   {journal} {\bibinfo  {journal} {\prl}\ }\textbf {\bibinfo {volume} {108}},\
  \bibinfo {eid} {196402} (\bibinfo {year} {2012})},\ \Eprint
  {https://arxiv.org/abs/1103.5437} {arXiv:1103.5437 [cond-mat.mes-hall]}
  \BibitemShut {NoStop}%
\bibitem [{Note1()}]{Note1}%
  \BibitemOpen
  \bibinfo {note} {One may check from the CS theory action that taking the
  complex conjugate of the wave function is equivalent to a mirror reflection
  of the spacetime manifold.}\BibitemShut {Stop}%
\bibitem [{Note2()}]{Note2}%
  \BibitemOpen
  \bibinfo {note} {A Wilson loop is a certain observable defined on an oriented
  loop and labeled by a representation of the gauge group. It can be regarded
  as an anyon world line.}\BibitemShut {Stop}%
\bibitem [{\citenamefont {{Bravyi}}(2008)}]{Bravyi2008Spurious}%
  \BibitemOpen
  \bibfield  {author} {\bibinfo {author} {\bibfnamefont {S.}~\bibnamefont
  {{Bravyi}}},\ }\href@noop {} {\bibfield  {journal} {\bibinfo  {journal}
  {unpublished}\ } (\bibinfo {year} {2008})}\BibitemShut {NoStop}%
\bibitem [{\citenamefont {{Zou}}\ and\ \citenamefont
  {{Haah}}(2016)}]{Zou2016Spurious}%
  \BibitemOpen
  \bibfield  {author} {\bibinfo {author} {\bibfnamefont {L.}~\bibnamefont
  {{Zou}}}\ and\ \bibinfo {author} {\bibfnamefont {J.}~\bibnamefont {{Haah}}},\
  }\bibfield  {title} {\bibinfo {title} {{Spurious long-range entanglement and
  replica correlation length}},\ }\href
  {https://doi.org/10.1103/PhysRevB.94.075151} {\bibfield  {journal} {\bibinfo
  {journal} {\prb}\ }\textbf {\bibinfo {volume} {94}},\ \bibinfo {eid} {075151}
  (\bibinfo {year} {2016})},\ \Eprint {https://arxiv.org/abs/1604.06101}
  {arXiv:1604.06101 [cond-mat.str-el]} \BibitemShut {NoStop}%
\bibitem [{\citenamefont {{Cano}}\ \emph {et~al.}(2015)\citenamefont {{Cano}},
  \citenamefont {{Hughes}},\ and\ \citenamefont
  {{Mulligan}}}]{Cano2015Spurious}%
  \BibitemOpen
  \bibfield  {author} {\bibinfo {author} {\bibfnamefont {J.}~\bibnamefont
  {{Cano}}}, \bibinfo {author} {\bibfnamefont {T.~L.}\ \bibnamefont
  {{Hughes}}},\ and\ \bibinfo {author} {\bibfnamefont {M.}~\bibnamefont
  {{Mulligan}}},\ }\bibfield  {title} {\bibinfo {title} {{Interactions along an
  entanglement cut in 2+1 D Abelian topological phases}},\ }\href
  {https://doi.org/10.1103/PhysRevB.92.075104} {\bibfield  {journal} {\bibinfo
  {journal} {\prb}\ }\textbf {\bibinfo {volume} {92}},\ \bibinfo {eid} {075104}
  (\bibinfo {year} {2015})},\ \Eprint {https://arxiv.org/abs/1411.5369}
  {arXiv:1411.5369 [cond-mat.str-el]} \BibitemShut {NoStop}%
\bibitem [{\citenamefont {{Williamson}}\ \emph {et~al.}(2019)\citenamefont
  {{Williamson}}, \citenamefont {{Dua}},\ and\ \citenamefont
  {{Cheng}}}]{Cheng2019Spurious}%
  \BibitemOpen
  \bibfield  {author} {\bibinfo {author} {\bibfnamefont {D.~J.}\ \bibnamefont
  {{Williamson}}}, \bibinfo {author} {\bibfnamefont {A.}~\bibnamefont
  {{Dua}}},\ and\ \bibinfo {author} {\bibfnamefont {M.}~\bibnamefont
  {{Cheng}}},\ }\bibfield  {title} {\bibinfo {title} {{Spurious Topological
  Entanglement Entropy from Subsystem Symmetries}},\ }\href
  {https://doi.org/10.1103/PhysRevLett.122.140506} {\bibfield  {journal}
  {\bibinfo  {journal} {\prl}\ }\textbf {\bibinfo {volume} {122}},\ \bibinfo
  {eid} {140506} (\bibinfo {year} {2019})},\ \Eprint
  {https://arxiv.org/abs/1808.05221} {arXiv:1808.05221 [quant-ph]} \BibitemShut
  {NoStop}%
\bibitem [{\citenamefont {{Stephen}}\ \emph {et~al.}(2019)\citenamefont
  {{Stephen}}, \citenamefont {{Dreyer}}, \citenamefont {{Iqbal}},\ and\
  \citenamefont {{Schuch}}}]{Schuch2019Spurious}%
  \BibitemOpen
  \bibfield  {author} {\bibinfo {author} {\bibfnamefont {D.~T.}\ \bibnamefont
  {{Stephen}}}, \bibinfo {author} {\bibfnamefont {H.}~\bibnamefont {{Dreyer}}},
  \bibinfo {author} {\bibfnamefont {M.}~\bibnamefont {{Iqbal}}},\ and\ \bibinfo
  {author} {\bibfnamefont {N.}~\bibnamefont {{Schuch}}},\ }\bibfield  {title}
  {\bibinfo {title} {{Detecting subsystem symmetry protected topological order
  via entanglement entropy}},\ }\href
  {https://doi.org/10.1103/PhysRevB.100.115112} {\bibfield  {journal} {\bibinfo
   {journal} {\prb}\ }\textbf {\bibinfo {volume} {100}},\ \bibinfo {eid}
  {115112} (\bibinfo {year} {2019})},\ \Eprint
  {https://arxiv.org/abs/1904.09450} {arXiv:1904.09450 [cond-mat.str-el]}
  \BibitemShut {NoStop}%
\bibitem [{\citenamefont {{Kato}}\ and\ \citenamefont
  {{Brand{\~a}o}}(2020)}]{Kato2020Spurious}%
  \BibitemOpen
  \bibfield  {author} {\bibinfo {author} {\bibfnamefont {K.}~\bibnamefont
  {{Kato}}}\ and\ \bibinfo {author} {\bibfnamefont {F.~G.~S.~L.}\ \bibnamefont
  {{Brand{\~a}o}}},\ }\bibfield  {title} {\bibinfo {title} {{Toy model of
  boundary states with spurious topological entanglement entropy}},\ }\href
  {https://doi.org/10.1103/PhysRevResearch.2.032005} {\bibfield  {journal}
  {\bibinfo  {journal} {Physical Review Research}\ }\textbf {\bibinfo {volume}
  {2}},\ \bibinfo {eid} {032005} (\bibinfo {year} {2020})},\ \Eprint
  {https://arxiv.org/abs/1911.09819} {arXiv:1911.09819 [quant-ph]} \BibitemShut
  {NoStop}%
\bibitem [{\citenamefont {{Kim}}\ \emph {et~al.}(2023)\citenamefont {{Kim}},
  \citenamefont {{Levin}}, \citenamefont {{Lin}}, \citenamefont {{Ranard}},\
  and\ \citenamefont {{Shi}}}]{Kim2023EELowerBound}%
  \BibitemOpen
  \bibfield  {author} {\bibinfo {author} {\bibfnamefont {I.~H.}\ \bibnamefont
  {{Kim}}}, \bibinfo {author} {\bibfnamefont {M.}~\bibnamefont {{Levin}}},
  \bibinfo {author} {\bibfnamefont {T.-C.}\ \bibnamefont {{Lin}}}, \bibinfo
  {author} {\bibfnamefont {D.}~\bibnamefont {{Ranard}}},\ and\ \bibinfo
  {author} {\bibfnamefont {B.}~\bibnamefont {{Shi}}},\ }\bibfield  {title}
  {\bibinfo {title} {{Universal Lower Bound on Topological Entanglement
  Entropy}},\ }\href {https://doi.org/10.1103/PhysRevLett.131.166601}
  {\bibfield  {journal} {\bibinfo  {journal} {\prl}\ }\textbf {\bibinfo
  {volume} {131}},\ \bibinfo {eid} {166601} (\bibinfo {year} {2023})},\ \Eprint
  {https://arxiv.org/abs/2302.00689} {arXiv:2302.00689 [quant-ph]} \BibitemShut
  {NoStop}%
\bibitem [{\citenamefont {Nielsen}\ and\ \citenamefont
  {Chuang}(2010)}]{NielsenChuangBook}%
  \BibitemOpen
  \bibfield  {author} {\bibinfo {author} {\bibfnamefont {M.~A.}\ \bibnamefont
  {Nielsen}}\ and\ \bibinfo {author} {\bibfnamefont {I.~L.}\ \bibnamefont
  {Chuang}},\ }\href {https://doi.org/10.1017/CBO9780511976667} {\emph
  {\bibinfo {title} {Quantum Computation and Quantum Information: 10th
  Anniversary Edition}}}\ (\bibinfo  {publisher} {Cambridge University Press},\
  \bibinfo {year} {2010})\BibitemShut {NoStop}%
\bibitem [{\citenamefont {{Fattal}}\ \emph {et~al.}(2004)\citenamefont
  {{Fattal}}, \citenamefont {{Cubitt}}, \citenamefont {{Yamamoto}},
  \citenamefont {{Bravyi}},\ and\ \citenamefont
  {{Chuang}}}]{Chuang2004StabilizerEE}%
  \BibitemOpen
  \bibfield  {author} {\bibinfo {author} {\bibfnamefont {D.}~\bibnamefont
  {{Fattal}}}, \bibinfo {author} {\bibfnamefont {T.~S.}\ \bibnamefont
  {{Cubitt}}}, \bibinfo {author} {\bibfnamefont {Y.}~\bibnamefont
  {{Yamamoto}}}, \bibinfo {author} {\bibfnamefont {S.}~\bibnamefont
  {{Bravyi}}},\ and\ \bibinfo {author} {\bibfnamefont {I.~L.}\ \bibnamefont
  {{Chuang}}},\ }\bibfield  {title} {\bibinfo {title} {{Entanglement in the
  stabilizer formalism}},\ }\href@noop {} {\bibfield  {journal} {\bibinfo
  {journal} {arXiv e-prints}\ ,\ \bibinfo {eid} {quant-ph/0406168}} (\bibinfo
  {year} {2004})},\ \Eprint {https://arxiv.org/abs/quant-ph/0406168}
  {arXiv:quant-ph/0406168 [quant-ph]} \BibitemShut {NoStop}%
\bibitem [{\citenamefont {{Sohal}}\ and\ \citenamefont
  {{Ryu}}(2023)}]{Ryu2023TripartitionDiagrammatic}%
  \BibitemOpen
  \bibfield  {author} {\bibinfo {author} {\bibfnamefont {R.}~\bibnamefont
  {{Sohal}}}\ and\ \bibinfo {author} {\bibfnamefont {S.}~\bibnamefont
  {{Ryu}}},\ }\bibfield  {title} {\bibinfo {title} {{Entanglement in
  tripartitions of topological orders: A diagrammatic approach}},\ }\href
  {https://doi.org/10.1103/PhysRevB.108.045104} {\bibfield  {journal} {\bibinfo
   {journal} {\prb}\ }\textbf {\bibinfo {volume} {108}},\ \bibinfo {eid}
  {045104} (\bibinfo {year} {2023})},\ \Eprint
  {https://arxiv.org/abs/2301.07763} {arXiv:2301.07763 [cond-mat.str-el]}
  \BibitemShut {NoStop}%
\bibitem [{\citenamefont {{Liu}}\ \emph {et~al.}(2022)\citenamefont {{Liu}},
  \citenamefont {{Geoffrion}},\ and\ \citenamefont
  {{Witczak-Krempa}}}]{LiuCC2022IQHNeg}%
  \BibitemOpen
  \bibfield  {author} {\bibinfo {author} {\bibfnamefont {C.-C.}\ \bibnamefont
  {{Liu}}}, \bibinfo {author} {\bibfnamefont {J.}~\bibnamefont {{Geoffrion}}},\
  and\ \bibinfo {author} {\bibfnamefont {W.}~\bibnamefont {{Witczak-Krempa}}},\
  }\bibfield  {title} {\bibinfo {title} {{Entanglement negativity versus mutual
  information in the quantum Hall effect and beyond}},\ }\href@noop {}
  {\bibfield  {journal} {\bibinfo  {journal} {arXiv e-prints}\ ,\ \bibinfo
  {eid} {arXiv:2208.12819}} (\bibinfo {year} {2022})},\ \Eprint
  {https://arxiv.org/abs/2208.12819} {arXiv:2208.12819 [cond-mat.str-el]}
  \BibitemShut {NoStop}%
\bibitem [{\citenamefont {{Liu}}\ \emph {et~al.}(2023)\citenamefont {{Liu}},
  \citenamefont {{Kusuki}}, \citenamefont {{Kudler-Flam}}, \citenamefont
  {{Sohal}},\ and\ \citenamefont {{Ryu}}}]{Ryu2023Trisection}%
  \BibitemOpen
  \bibfield  {author} {\bibinfo {author} {\bibfnamefont {Y.}~\bibnamefont
  {{Liu}}}, \bibinfo {author} {\bibfnamefont {Y.}~\bibnamefont {{Kusuki}}},
  \bibinfo {author} {\bibfnamefont {J.}~\bibnamefont {{Kudler-Flam}}}, \bibinfo
  {author} {\bibfnamefont {R.}~\bibnamefont {{Sohal}}},\ and\ \bibinfo {author}
  {\bibfnamefont {S.}~\bibnamefont {{Ryu}}},\ }\bibfield  {title} {\bibinfo
  {title} {{Multipartite entanglement in two-dimensional chiral topological
  liquids}},\ }\href {https://doi.org/10.48550/arXiv.2301.07130} {\bibfield
  {journal} {\bibinfo  {journal} {arXiv e-prints}\ ,\ \bibinfo {eid}
  {arXiv:2301.07130}} (\bibinfo {year} {2023})},\ \Eprint
  {https://arxiv.org/abs/2301.07130} {arXiv:2301.07130 [cond-mat.str-el]}
  \BibitemShut {NoStop}%
\bibitem [{Note3()}]{Note3}%
  \BibitemOpen
  \bibinfo {note} {We note that after this edge duplication, the resulting
  state can be regarded as the ground state of the quantum double model on the
  extended lattice. The requirement that each pair of spins obtained from
  duplication have the same group element state is ensured by the holonomy free
  condition. Therefore, the process of edge duplication does not change the TO
  of the state.}\BibitemShut {Stop}%
\bibitem [{\citenamefont {{Hu}}\ and\ \citenamefont
  {{Wan}}(2019)}]{Wan2019TEE}%
  \BibitemOpen
  \bibfield  {author} {\bibinfo {author} {\bibfnamefont {Y.}~\bibnamefont
  {{Hu}}}\ and\ \bibinfo {author} {\bibfnamefont {Y.}~\bibnamefont {{Wan}}},\
  }\bibfield  {title} {\bibinfo {title} {{Entanglement entropy, quantum
  fluctuations, and thermal entropy in topological phases}},\ }\href
  {https://doi.org/10.1007/JHEP05(2019)110} {\bibfield  {journal} {\bibinfo
  {journal} {Journal of High Energy Physics}\ }\textbf {\bibinfo {volume}
  {2019}},\ \bibinfo {eid} {110} (\bibinfo {year} {2019})},\ \Eprint
  {https://arxiv.org/abs/1901.09033} {arXiv:1901.09033 [cond-mat.str-el]}
  \BibitemShut {NoStop}%
\bibitem [{\citenamefont {{Hu}}\ \emph {et~al.}(2013)\citenamefont {{Hu}},
  \citenamefont {{Wan}},\ and\ \citenamefont {{Wu}}}]{Hu2013TwistedQD}%
  \BibitemOpen
  \bibfield  {author} {\bibinfo {author} {\bibfnamefont {Y.}~\bibnamefont
  {{Hu}}}, \bibinfo {author} {\bibfnamefont {Y.}~\bibnamefont {{Wan}}},\ and\
  \bibinfo {author} {\bibfnamefont {Y.-S.}\ \bibnamefont {{Wu}}},\ }\bibfield
  {title} {\bibinfo {title} {{Twisted quantum double model of topological
  phases in two dimensions}},\ }\href
  {https://doi.org/10.1103/PhysRevB.87.125114} {\bibfield  {journal} {\bibinfo
  {journal} {\prb}\ }\textbf {\bibinfo {volume} {87}},\ \bibinfo {eid} {125114}
  (\bibinfo {year} {2013})},\ \Eprint {https://arxiv.org/abs/1211.3695}
  {arXiv:1211.3695 [cond-mat.str-el]} \BibitemShut {NoStop}%
\bibitem [{\citenamefont {{Hu}}\ \emph {et~al.}(2018)\citenamefont {{Hu}},
  \citenamefont {{Geer}},\ and\ \citenamefont
  {{Wu}}}]{Wu2018LevinWenExcitations}%
  \BibitemOpen
  \bibfield  {author} {\bibinfo {author} {\bibfnamefont {Y.}~\bibnamefont
  {{Hu}}}, \bibinfo {author} {\bibfnamefont {N.}~\bibnamefont {{Geer}}},\ and\
  \bibinfo {author} {\bibfnamefont {Y.-S.}\ \bibnamefont {{Wu}}},\ }\bibfield
  {title} {\bibinfo {title} {{Full dyon excitation spectrum in extended
  Levin-Wen models}},\ }\href {https://doi.org/10.1103/PhysRevB.97.195154}
  {\bibfield  {journal} {\bibinfo  {journal} {\prb}\ }\textbf {\bibinfo
  {volume} {97}},\ \bibinfo {eid} {195154} (\bibinfo {year} {2018})},\ \Eprint
  {https://arxiv.org/abs/1502.03433} {arXiv:1502.03433 [cond-mat.str-el]}
  \BibitemShut {NoStop}%
\end{thebibliography}%

\section*{Acknowledgments}
I am grateful to Chao Yin for a previous related collaboration \cite{YinLiu2023Ent}, to Yanting Cheng and Pengfei Zhang for feedbacks on the manuscript, and to Wenjie Ji, Yuan-Ming Lu, Nat Tantivasadakarn, Ashvin Vishwanath, and Liujun Zou for helpful discussions. 
I am supported by the Gordon and Betty Moore Foundation under Grant No. GBMF8690 and the National Science Foundation under Grant No. NSF PHY-1748958. 

\section*{Author Contributions}
This is a single-author paper. 

\section*{Competing Interests}
The author declares no competing interests. 

%\end{document}

%%%%%%%%%% Merge with supplemental materials %%%%%%%%%%
%\pagebreak
\clearpage
%\widetext
%\begin{widetext}
\begin{center}
	\large\textsc{Supplementary Information}
\end{center}
%\end{widetext}
%%%%%%%%%% Merge with supplemental materials %%%%%%%%%%
%%%%%%%%%% Prefix a "S" to all equations, figures, tables and reset the counter %%%%%%%%%%
\setcounter{equation}{0}
\setcounter{figure}{0}
\setcounter{table}{0}
\setcounter{page}{1}
\makeatletter
\renewcommand{\theequation}{S\arabic{equation}}
\renewcommand{\thefigure}{S\arabic{figure}}
\renewcommand{\bibnumfmt}[1]{[S#1]}
%\renewcommand{\citenumfont}[1]{S#1}
%%%%%%%%%% Prefix a "S" to all equations, figures, tables and reset the counter %%%%%%%%%%

\appendix

In this appendix, we apply our protocol to Kitaev's quantum double models \cite{Kitaev2003ToricCode}. We first review the models in Supplementary Note \ref{App:QDModelReview} and then implement our protocol in Supplementary Note \ref{App:ProtocolonQDs}. In the last step of the protocol, we need some EE results in quantum double models. Details of the EE calculations are provided in Supplementary Note \ref{App:EEQDs}. 

\section{Quantum Double Models}\label{App:QDModelReview}
Let us start by reviewing some definitions. There is a quantum double model for each finite group $G$ (generally nonabelian). The model can live on an arbitrary lattice on an arbitrary orientable two-dimensional surface. The physical degrees of freedom, called spins, live on the edges, and the local Hilbert space of each spin is spanned by the orthonormal group element basis $\{ \ket{g}|g\in G \}$. We need to choose a direction for each edge. Reversing the direction of a particular edge will be equivalent to the basis change $\ket{z}\mapsto\ket{z^{-1}}$ for the corresponding spin. Let $v$ be a vertex, and $f$ be an adjacent face, we define the local gauge transformations $A_v(g)$ and magnetic charge operators $B_{(v,f)}(h)$ as follows. 
\begin{align}
	&A_v(g)\Bigg|{\dia{VertexState}{-19}{0}}\Bigg\rangle=\Bigg|{\dia{VertexGaugeTrans}{-19}{0}}\Bigg\rangle,\\
	&B_{(v,f)}(h)\Bigg|{\dia{FaceState}{-19}{0}}\Bigg\rangle=\delta_{z_1z_2z_3z_4,h}\Bigg|{\dia{FaceState}{-19}{0}}\Bigg\rangle. 
\end{align}
Here we use a tetravalent vertex and a square face as examples, and the generalizations should be straightforward. 
We further define two projectors:
\begin{align}
	A_v:=|G|^{-1}\sum_{g\in G}A_v(g),\quad  B_f:=B_{(v,f)}(1). 
\end{align}
Note that $B_f$ does not depend on the choice of the adjacent vertex $v$. The quantum double Hamiltonian is then given by \cite{Kitaev2003ToricCode}
\begin{align}
	H_{\rm QD}=-\sum_v A_v-\sum_f B_f. 
\end{align}
The projectors in $H_{\rm QD}$ all commute with each other. 

\begin{figure}
	\includegraphics{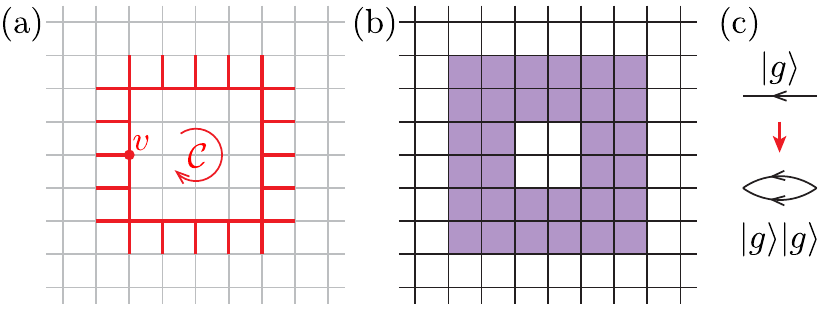}
	\caption{Illustrations about the quantum doubled models. (a) Support of the operator $A_{\mc C}(g)$ for a loop $\mc C$ with base point $v$. (b) Bipartition into $A$ (annulus) and $B$. (c) Duplication of a spin. }
	\label{fig:QDLattice}
\end{figure}

On a sphere, the model has a unique gapped ground state $\ket{\Omega}$ satisfying $A_v=B_f=1$ for all vertices $v$ and faces $f$. Let us introduce some useful properties of this state before implementing our protocol. More explicitly, we can write $\ket{\Omega}\propto(\prod_vA_v)(\bigotimes_e\ket{1}_e)$ where $e$ runs over all edges. Using $[A_v(g),A_{v'}(h)]=0$ for $v\neq v'$, and $A_v(g)A_v(h)=A_v(gh)$, one can check that 
\begin{itemize}
	\item $A_v(g)\ket{\Omega}=\ket{\Omega}$ for all $v$ and $g$. 
\end{itemize}
Given an oriented path along the edges from one vertex to another, when the relevant edge orientations are all consistent with the path direction, we define the {holonomy} of the path to be the product of all group elements on the path {in the reversed order}. For example, the holonomy of the following path from vertex $u$ to $v$ is given by $g_3g_2g_1$ (note the ordering). 
%\[\xymatrix{u~\cdot\ar[r]^{g_1} &\cdot\ar[r]^{g_2} &\cdot\ar[r]^{g_3} &\cdot~v}\]
\[\dia{HolonomyDef}{-10}{0}\]
With this terminology, $B_f=1$ means that the holonomy around any face is trivial. It follows that 
\begin{itemize}
	\item for the state $\ket{\Omega}$, the holonomy for any closed loop is trivial, and the holonomy between any two vertices does not depend on the choice of path. 
\end{itemize}
Analogously, the property $A_v(g)\ket{\Omega}=\ket{\Omega}$ also has a generalization on loops. Given a loop $\mc C$ with a base point $v$, we can define operators $A_{\mc C}(g)$ whose support has the shape of a comb shown in Supplementary Fig.\,\ref{fig:QDLattice}a. The action of $A_{\mc C}(g)$ is defined by 
\begin{align}
	&A_{\mc C}(g)\bigg|\dia{CombState}{-10}{0}\bigg\rangle\nonumber\\
	=&\bigg|\dia{CombStateTrans}{-10}{0}\bigg\rangle, 
\end{align}
where $x_{1k}:=x_1x_2\cdots x_k$. Let $\mc L_{\rm flat}$ be the Hilbert subspace spanned by spin configurations satisfying $B_f=1$ $\forall f$. $A_{\mc C}(g)$ preserves $\mc L_{\rm flat}$. Moreover, $[A_{\mc C}(g),A_{v'}(h)]=0$ for $v\neq v'$, and $A_{\mc C}(g)A_v(h)=A_v(h)A_{\mc C}(h^{-1}gh)$. Let $A_{\mc C}:=|G|^{-1}\sum_{g\in G}A_{\mc C}(g)$. One can check $A_{\mc C}\ket{\Omega}=\ket{\Omega}$ with these commutation relations and the observation that acting $A_{\mc C}(g)$ on $\bigotimes_e\ket{1}_e$ is equivalent to acting several $A_{v}(g)$ operators. Using $A_{\mc C}(g)A_{\mc C}(h)=A_{\mc C}(gh)$, it follows that 
\begin{itemize}
	\item $A_{\mc C}(g)\ket{\Omega}=\ket{\Omega}$. 
\end{itemize}
We will also call $A_{\mc C}(g)$ a gauge transformation operator. 

\section{Applying the Protocol}\label{App:ProtocolonQDs}
With all these prerequisites, we are now ready to implement our protocol. Suppose a state $\ket{\psi}$ has no excitation with respect to $H_{\rm QD}$ in a large contractible region, and suppose this region has the form of a square lattice. It has been shown in Ref.\,\onlinecite{Cui2020QuantumDouble} that the reduced density matrix of $\ket{\psi}$ deep inside this region has no dependence on the choice of $\ket{\psi}$. Hence, we will just take $\ket{\psi}$ to be the sphere ground state $\ket{\Omega}$. 
Consider a bipartition of the space like that in Supplementary Fig.\,\ref{fig:QDLattice}b. Here, in order to draw the partition interface right on the edges, we imagine duplicating each edge into two, and require them to always be in the same group element state; see Supplementary Fig.\,\ref{fig:QDLattice}c. This is just a simple trick inspired by Ref.\,\onlinecite{Levin2006TEE} for getting a nice partition; each pair of spins obtained this way can still be regarded as a single spin unless the partition is being considered \footnote{We note that after this edge duplication, the resulting state can be regarded as the ground state of the quantum double model on the extended lattice. The requirement that each pair of spins obtained from duplication have the same group element state is ensured by the holonomy free condition. Therefore, the process of edge duplication does not change the TO of the state. }. Still calling the state $\ket{\Omega}$, we can write
\begin{align}
	\ket{\Omega}=\sum_{\substack{\text{$g_A$ with trivial}\\ \text{holonomies}}}\ket{g_A}_A\ket{\phi(g_A)}_B, 
\end{align}
where $g_A$'s are spin configurations in the annulus region $A$, and $\ket{\phi(g_A)}_B$ are some set of states on $B$ that are not necessarily normalized or orthogonal. In the summation above, we require $g_A$ to have trivial holonomy around any loop in $A$, contractible or noncontractible. 

We claim that for any two product states of group elements $\ket{g_A}$ and $\ket{g_A'}$ with trivial holonomies in $A$, and with the same group elements on the boundary $\pa A$, there exists a gauge transformation acting on $A$ which transforms $\ket{g_A}$ to $\ket{g_A'}$. We can build such a transformation step by step: First choose the unique local gauge transformation $A_{v_0}(g_0)$ acting on the top left internal vertex such that $A_{v_0}(g_0)\ket{g_A}$ matches $\ket{g_A'}$ on the entire top left face. Then move rightward and choose the unique $A_{v_1}(g_1)$ acting on the next vertex $v_1$ such that $A_{v_1}(g_1)A_{v_0}(g_0)\ket{g_A}$ matches $\ket{g_A'}$ on the top left two faces. Continue this process until the top row of faces are all done, and start over from the left most vertex in the second row. When the face at the top left corner of the internal boundary of $A$ is encountered, in order to not alter the spin configuration on $\pa A$, we need to utilize a loop operator $A_{\mc C}(g)$ to fix that face, where $\mc C$ coincides with the internal boundary. This kind of loop operators are no longer needed subsequently, and eventually $g_A$ can be completely transformed into $g_A'$. Since the $A_v(g)$ and $A_{\mc C}(h)$ operators are all unitary and leave $\ket{\Omega}$ invariant, when $g_A$ and $g_A'$ are related by a gauge transformation, we have $\ket{\phi(g_A)}_B={}_A\braket{g_A|\Omega}={}_A\braket{g_A'|\Omega}=\ket{\phi(g_A')}_B$. This means that $\phi(g_A)$ actually only depends on the spin configuration on $\pa A$. We can then write
\begin{align}
	\ket{\Omega}=\sum_{\substack{\text{$g_{\pa A}$ with trivial}\\ \text{holonomies}}}\ket{\xi(g_{\pa A})}_A\ket{\phi(g_{\pa A})}_B, 
\end{align}
where $\ket{\xi(g_{\pa A})}$ is the sum of all holonomy free $\ket{g_A}$ such that $(g_A)|_{\pa A}=g_{\pa A}$. 

The $\ket{\phi(g_{\pa A})}_B$ states are orthogonal to each other, because subsystem $B$ contains a copy of the spin configuration on $\pa A$. We will now prove that they also have the same norm. Observe that any spin configuration $g_{\pa A}$ with trivial holonomies can be transformed into another $g'_{\pa A}$ using the local gauge transformations $A_v(g)$ overlapping with $\pa A$. Let $U$ be that total gauge transformation operator. We can write $U=V_A V_B$ where $V_A$ ($V_B$) is a unitary operator acting on $A$ ($B$). From the definition of $\ket{\xi(g_{\pa A})}$, we can see $V_A\ket{\xi(g_{\pa A})}=\ket{\xi(g_{\pa A}')}$. It then follows from $U\ket{\Omega}=\ket{\Omega}$ that $V_B\ket{\phi(g_{\pa A})}=\ket{\phi(g_{\pa A}')}$. Hence $\ket{\phi(g_{\pa A})}$ and $\ket{\phi(g_{\pa A}')}$ indeed have the same norm. 

The above analysis is inspired by Ref.\,\onlinecite{Cui2020QuantumDouble}. With these results established, we easily find
\begin{align}
	\ket{\rho_A}\propto\sum_{\substack{\text{$g_{\pa A}$ with trivial}\\ \text{holonomies}}}\ket{\xi(g_{\pa A})}\ket{\xi(g_{\pa A})}. 
\end{align}
This state lives on the doubled system $A\cup A'$, i.e. two copies of the annulus in Supplementary Fig.\,\ref{fig:QDLattice}b. Now imagine gluing the vertices in $\pa A$ with the corresponding vertices in $\pa A'$, obtaining a torus. One can check that $\ket{\rho_A}$ is a ground state of the quantum double model defined on this torus. In particular, it is invariant under the actions of (new) local gauge transformations crossing the gluing interface. The quantum double model has multiple ground states on the torus. $\ket{\rho_A}$ is the special one characterized by trivial holonomy around the hole existing in each of the original annuli, and by $A_{\mc C}(g)=1$ around the same hole. These actually imply trivial anyon flux through the hole, because any anyon flux would be detected by some loop operator that winds another anyon around it. We have thus successfully recovered the picture in Fig.\,3c of the main text. 

The next step is to obtain the desired mutual information. Although this has been done in the continuum, we did not find a lattice result meeting our need. We have thus performed an honest calculation, and it eventually works out magically. We refer interested readers to Supplementary Note \ref{App:EEQDs} for the rather tedious details. The key technical trick is to use the holonomy basis introduced in Ref.\,\onlinecite{Wan2019TEE}: In the subspace with $B_f=1$, labeling spin configurations by group elements on all edges contains a lot of redundancy, and we can instead label spin configurations in each regions using independent holonomy variables. With this way of labeling basis vectors, the remaining calculation is more or less just brute-force. 

We have found that the anyon sectors are labeled by a pair of variables $(C,\mu)$. Here, $C$ labels a conjugacy class of $G$. Let $h_C\in C$ be a representative that is arbitrary but fixed once for all. $\mu$ labels an irreducible representation of $Z_C:=\{ g\in G|gh_C=h_C g\}$, the centralizer of $h_C$. The quantum dimensions are given by $d_{(C,\mu)}=|C|d_\mu$ where $d_\mu$ is the dimension of the representation $\mu$. These are consistent with known results \cite{Kitaev2003ToricCode,Hu2013TwistedQD,Wu2018LevinWenExcitations}. 

\section{EE Computations}\label{App:EEQDs}
In this appendix, we elaborate details about EE computations in quantum double models. We will utilize the technique of holonomy basis introduced in Ref.\,\onlinecite{Wan2019TEE}.
\subsection{Holonomy Basis}
Throughout this section, we restrict our attention to the Hilbert subspace $\mc L_{\rm flat}$ spanned by spin configurations satisfying $B_f=1$ for all $f$. Within this subspace, it is convenient to label spin configurations by independent holonomies: We choose some base point $v_0$, a path from $v_0$ to every other vertex, and a closed loop based at $v_0$ for each noncontractible cycle of the space. Then each spin configuration is uniquely determined by the holonomies along these paths and loops; one can solve the group element state on each edge from those holonomies. 

As an example, let there be $V$ number of vertices $v_0,w_1,w_2,\cdots,w_{V-1}$, and just one noncontractible circle. Denote by $g_i$ the holonomy from $v_0$ to $w_i$ along the chosen path, and by $k$ the holonomy around the closed loop. We can write the basis vectors as $\{\ket{g_i;k}\}$. An operator $A_{w_i}(h)$ acts as 
\begin{align}
	A_{w_i}(h)\ket{g_j;k}=\ket{g_1,\cdots,hg_i,\cdots,g_{V-1};k},
\end{align}
and $A_{v_0}(h)$ acts as
\begin{align}
	A_{v_0}(h)\ket{g_j;k}=\ket{g_1h^{-1},\cdots,g_{V-1}h^{-1};hkh^{-1}}. 
\end{align}

Note that despite the terminology ``holonomy basis'', we are still using the natural basis of tensor products of group elements. We just adopt a more convenient labeling of the basis vectors. 

\subsection{Warm-Up: EE of a Disk on a Sphere}
As a warm-up, let us compute the EE for a disk region on a sphere. Denote this disk by $A$, and the complement (also a disk) by $B$. We define holonomy bases separately for the two regions, as shown in Supplementary Fig.\,\ref{fig:DiskHolonomyBasis}. The base point in $A$ ($B$) is $v_A$ ($v_B$). Let there be $L$ number of vertices $w_1,\cdots,w_L$ on the partition interface, we denote the holonomy from $v_A$ ($v_B$) to $w_i$ by $g_i$ ($h_i$). There are also holonomies from $v_A$ ($v_B$) to other internal vertices of $A$ ($B$), but it turns out that these internal holonomies do not contribute to the EE. Therefore, for simplicity, we just retain one such internal vertex $u_A$ ($u_B$) in $A$ ($B$) and denote the corresponding holonomy by $p$ ($q$). 

\begin{figure}
	\includegraphics{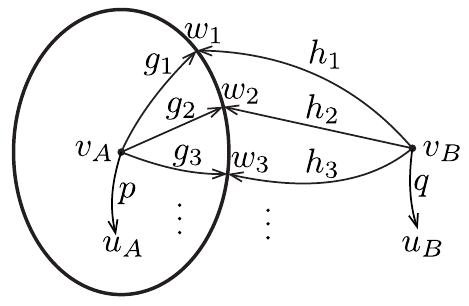}
	\caption{Holonomies for the inside and outside of a disk. }
	\label{fig:DiskHolonomyBasis}
\end{figure}

We are interested in states with $B_f=1$ for all faces $f$, which implies that the holonomy around any contractible loop is trivial. We thus have
\begin{align}
	g_1^{-1}h_1=g_2^{-1}h_2=\cdots=g_L^{-1}h_L=:a. 
\end{align}
Take an arbitrary holonomy configuration $\ket{g_i;h_i;p;q}$. Imposing the above condition, we can write $h_i=g_ia$. The ground state $\ket{\Omega}$ can be obtained by applying the projectors $A_v$ for all $v$ to the state $\ket{g_i;g_ia;p;q}$. First consider the $A_{w_i}$ operators. We have
\begin{align}
	&\prod_{i=1}^L A_{w_i}\ket{g_1,g_2,\cdots;g_1a,g_2a,\cdots;p;q}\\
	\propto & \sum_{h_i}\ket{h_1g_1,h_2g_2,\cdots;h_1g_1a,h_2g_2a,\cdots;p;q}\\
	=& \sum_{g_i}\ket{g_1,g_2,\cdots;g_1a,g_2a,\cdots;p;q}. 
\end{align}
Here, to obtain the last line, we first do a change of variable $h_i\mapsto h_ig_i^{-1}$ to absorb all $g_i$, and then rename the dummy variables $h_i$ to $g_i$. We see that the net effect of the $A_{w_i}$ operators is a summation over the $g_i$'s. Similarly, applying $A_{u_A}$ and $A_{u_B}$ results in a summation over $p$ and $q$. Next, applying $A_{v_B}$, we have
\begin{align}
	&A_{v_B}\sum_{g_i,p,q}\ket{g_i;g_ia;p;q}\\
	\propto&\sum_{g_i,p,q,h_B}\ket{g_i;g_iah_B^{-1};p;qh_B^{-1}}\\
	=&\sum_{g_i,p,q,a}\ket{g_i;g_ia;p;q}. 
\end{align}
We effectively get a summation over $a$. Finally, applying $A_{v_A}$ to the above state has no effect. We have thus found 
\begin{align}
	\ket{\Omega}\propto \sum_{g_i,p,q,a}\ket{g_i;g_ia;p;q}. 
\end{align}
We now see that the internal vertices $u_A$ and $u_B$ just contribute a product state $(|G|^{-1}\sum_p\ket{p})_A(|G|^{-1}\sum_q\ket{q})_B$ which does not affect the EE. We will therefore ignore the $p$ and $q$ variables in the following, and simply write $\ket{\Omega}\propto \sum_{g_i,a}\ket{g_i;g_ia}$. 

Taking a partial trace over subsystem $A$, we obtain
\begin{align}
	\rho_B\propto\sum_{g_i,a,b}\ket{g_ia}\bra{g_ib}. 
\end{align}
For later convenience, we do a change of variable: $a\mapsto g_1^{-1}a$, $b\mapsto g_1^{-1}b$, $g_{i>1}\mapsto g_{i>1}g_1$. It follows that 
\begin{align}
	\rho_B\propto \sum_{g_{i>1},a,b}\ket{a,g_2a,g_3a,\cdots}\bra{b,g_2b,g_3b,\cdots}.  
\end{align}
We denote by $\tilde\rho_B$ the unnormalized density matrix on the right-hand side. 

To compute the Renyi EE, we need to evaluate $\Tr(\tilde\rho_B^n)$. We add a superscript $\mu=1,2,\cdots,n$ to the dummy variales in the $\mu$-th copy of $\tilde\rho_B$, and it is not hard to see that 
\begin{align}
	b^\mu=a^{\mu+1},\quad g^\mu_{i>1}=g^{\mu+1}_{i>1}=:g_{i>1}. 
\end{align}
We are left with 
\begin{align}
	\Tr(\tilde\rho_B^n)=\sum_{a^\mu,g_{i>1}}1=|G|^{n+L-1}. 
\end{align}
The Renyi EE is then 
\begin{align}
	S_A^{(n)}=S_B^{(n)}=\frac{1}{1-n}\log\left[ \frac{\Tr(\tilde\rho_B^n)}{\Tr(\tilde\rho_B)^n} \right]=(L-1)\log|G|. 
\end{align}
We can identify the $L\log|G|$ term as the ``area-law'' term proportional to the length of the partition interface, and identify $-\log|G|$ as the universal topological EE \cite{Kitaev2006TEE,Levin2006TEE}. It follows that $\mc D=|G|$. 

Our calculation above essentially follows that in Ref.\,\onlinecite{Wan2019TEE}. The same result is also obtained in Ref.\,\onlinecite{Cui2020QuantumDouble} with different approaches.

\subsection{EEs on a Torus}
Let us now tackle the real problem. Recall that we need to compute the mutual information between two disjoint annulus regions of a torus. To the end, we need to compute the EEs both of a single annulus, and of two annuli. Let us then imagine partitioning the torus into $2N$ number of annuli, denoted by $T_1,T_2,\cdots,T_{2N}$. We will be interested only in the cases $N=1$ and $N=2$, but it will be convenient to keep a unified notation. 

We define holonomy bases separately for the $2N$ regions. In Supplementary Fig.\,\ref{fig:TorusHolonomyBasis}, we show our holonomy variables for the case of $N=2$, and generalizations to other values of $N$ should be clear. We denote by $v_I$ with $I\in\{1,2,\cdots,2N\}$ the base point in the region $T_I$. From each $v_I$, there are holonomies $g_{I,m}$ and $h_{I,m}$ to the interfaces with $T_{I-1}$ and $T_{I+1}$, respectively. We assume the numbers of vertices on all the interface circles are the same, denoted by $L$, just for the simplicity of notations. There is also a holonomy $k_I$ around a noncontractible loop in each region. As in the previous example, internal vertices will not contribute to the EE, so we have simply omitted them. We draw a loop $\mc C$ that is based at $v_1$ and passes through all the other $v_I$. The comb-like support of $ A_{\mc C}(g)$ operators associated with the loop is also plotted. As we have shown in the main text, the special ground state $\ket{\Omega}$ we consider has trivial holonomy around $\mc C$, and satisfies $A_{\mc C}\ket{\Omega}=\ket{\Omega}$. 

\begin{figure}
	\includegraphics{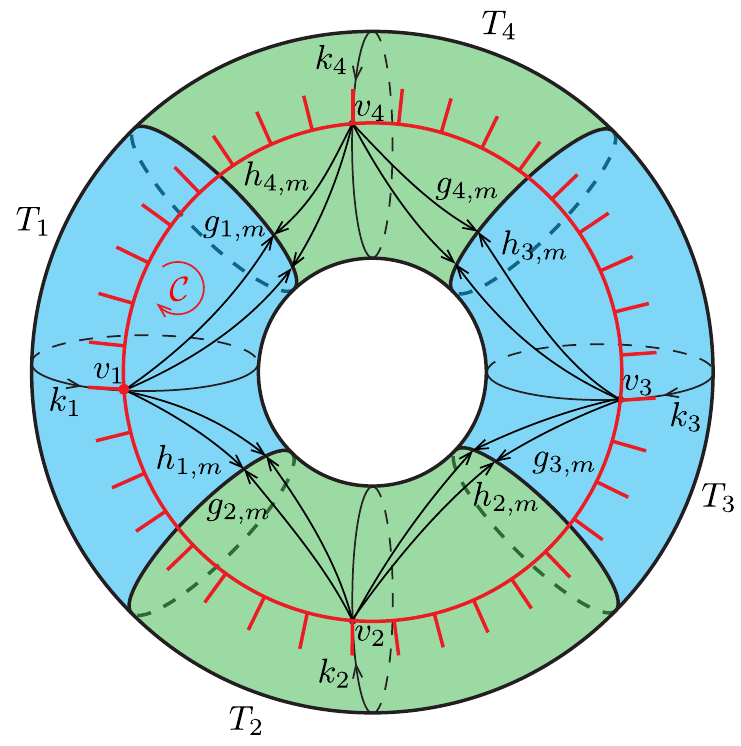}
	\caption{Holonomy variables for a tetrapartite torus, and the support (red comb) of the loop operators $A_{\mc C}(g)$. }
	\label{fig:TorusHolonomyBasis}
\end{figure}

Let us start from some holonomy configuration $\ket{g_{I,m};h_{I,m};k_I}$. We would like to impose $B_f=1$ for all faces $f$, or equivalently, every contractible loop has trivial holonomy. Then $g^{-1}_{I+1,m}h_{I,m}$ should be independent of $m$, and we will denote this quantity by $a_I$. In other words, $h_{I,m}=g_{I+1,m}a_I$. The $B_f=1$ conditions also imply $k_Ia_I^{-1}k_{I+1}^{-1}a_I=1$. We can thus write
\begin{align}
	k_{I+1}=a_I a_{I-1}\cdots a_1k_1a_1^{-1}\cdots a_{I-1}^{-1}a_I^{-1}. 
\end{align}
The condition of trivial holonomy around $\mc C$ implies that 
\begin{align}
	a_{2N}a_{2N-1}\cdots a_1=1. 
	\label{eq:TrivialCHolonomyApp}
\end{align}

To obtain the desired ground state $\ket{\Omega}$, we shall apply the projectors $A_{\mc C}$, $A_{I,m}$, and $A_{v_I}$ to $\ket{g_{I,m};h_{I,m};k_I}$, where $A_{I,m}$ is the local gauge invariance projector acting on the $m$-th vertex of the $T_IT_{I+1}$ interface (affecting $h_{I,m}$ and $g_{I+1,m}$). 

We can choose the paths for $g_{I,m}$ and $h_{I,m}$ to have no overlap with the support of $A_{\mc C}(z)$ operators. The same holds for paths from $v_I$ to other internal vertices, which we have omitted. As a result, $A_{\mc C}(z)$ will only affect $k_I$, and its action is more explicitly given by 
\begin{align}
	k_I\mapsto &(a_{I-1}\cdots a_1za_1^{-1}\cdots a_{I-1}^{-1})k_I\\
	&=a_{I-1}\cdots a_1(zk_1)a_1^{-1}\cdots a_{I-1}^{-1}. 
\end{align}
Hence, after the action of $A_{\mc C}$, the state $\ket{g_{I,m};h_{I,m};k_I}$ becomes
\begin{align}
	\sum_{k_1}\ket{g_{I,m};g_{I+1,m}a_I;a_{I-1}\cdots a_1k_1a_1^{-1}\cdots a_{I-1}^{-1}}, 
\end{align}
up to a normalization factor. The actions of $A_{I,m}$ effectively induce summations over $g_{I,m}$ for the above expression. Finally, applying $A_{v_I}$, we obtain
\begin{widetext}
	\begin{align}
		\ket{\Omega}\propto&\sum_{k_1,~g_{I,m},~h_I}\ket{g_{I,m}h_I^{-1};g_{I+1,m}a_Ih_I^{-1};h_Ia_{I-1}a_{I-2}\cdots a_1k_1a_1^{-1}\cdots a_{I-2}^{-1}a_{I-1}^{-1}h_I^{-1}}\\
		=&\sum_{k_1,~g_{I,m},~h_I}\ket{g_{I,m};g_{I+1,m}(h_{I+1}a_Ih_I^{-1});(h_Ia_{I-1}h_{I-1}^{-1})(h_{I-1}a_{I-2}h_{I-2}^{-1})\cdots (h_2a_1h_1^{-1})k_1\cdots}, 
	\end{align}
	where the second line is obtained by a change of variables: $g_{I,m}\mapsto g_{I,m}h_I$ and $k_1\mapsto h_1^{-1}k_1h_1$. Now observe that the following map within $G^{\times 2N}$ is one-to-one: 
	\begin{align}
		(h_{2N},h_{2N-1},\cdots,h_1)\mapsto(h_{2N}a_{2N-1}h_{2N-1}^{-1},h_{2N-1}a_{2N-2}h_{2N-2}^{-1},\cdots,h_2a_1h_1^{-1},h_1). 
	\end{align}
	Therefore, summing over the variables on the left is equivalent to summing over those on the right. Using this fact, together with Supplementary Eq.\,\ref{eq:TrivialCHolonomyApp} that implies $(h_1a_{2N}h_{2N}^{-1})(h_{2N}a_{2N-1}h_{2N-1}^{-1})\cdots(h_2a_1h_1^{-1})=1$, we obtain
	\begin{align}
		\ket{\Omega}\propto&\sum_{k_1,~a_I,~g_{I,m}}\delta(a_{2N}a_{2N-1}\cdots a_1,1)\ket{g_{I,m};g_{I+1,m}a_I;a_{I-1}\cdots a_1k_1a_1^{-1}\cdots a_{I-1}^{-1}}, 
	\end{align}
	where $\delta(\cdot,\cdot)$ is the Kronecker delta function. 
	
	Partial tracing $\ket{\Omega}\bra{\Omega}$ over $T_1,T_3,T_5,\cdots,T_{2N-1}$, and after a simple change of variables, we find the following reduced density matrix. 
	\begin{align}
		&\rho\propto\sum_{k,~g_{I,m},~a_I,~b_I}\delta(a_{2N}\cdots a_1,1)\delta(b_{2N}\cdots b_1,1)\delta(a_{2K-2}\cdots a_1 k a_1^{-1}\cdots a_{2K-2}^{-1},b_{2K-2}\cdots b_1 k b_1^{-1}\cdots b_{2K-2}^{-1})|_{2\leq K\leq N}\nonumber\\
		&\ket{g_{2K,m}a^{-1}_{2K-1};g_{2K+1,m}a_{2K};a_{2K-1}\cdots a_1 k a_1^{-1}\cdots a_{2K-1}^{-1}}\bra{g_{2K,m}b^{-1}_{2K-1};g_{2K+1,m}b_{2K};b_{2K-1}\cdots b_1 k b_1^{-1}\cdots b_{2K-1}^{-1}}=:\tilde\rho. 
	\end{align}
	We have introduced a new index $K$ which takes values from $\{ 1,2,\cdots,N \}$ unless otherwise indicated (like for the third $\delta$ above). Now apply the following change of variables: 
	\begin{align}
		&a_{2K-1}\mapsto a_{2K-1}g_{2K,1},\quad b_{2K-1}\mapsto b_{2K-1}g_{2K,1},\quad g_{2K,m>1}\mapsto g_{2K,m}g_{2K,1},\\
		&a_{2K}\mapsto g_{2K+1,1}^{-1}a_{2K},\quad b_{2K}\mapsto g_{2K+1,1}^{-1}b_{2K},\quad g_{2K+1,m>1}\mapsto g_{2K+1,m}g_{2K+1,1}.
	\end{align}
	We obtain
	\begin{align}
		\tilde\rho=&\sum_{k,~g_{I,m},~a_I,~b_I}\delta(g_{1,1}^{-1}a_{2N}a_{2N-1}g_{2N,1}g_{2N-1,1}^{-1}a_{2N-2}\cdots a_1g_{2,1},1)\delta(g_{1,1}^{-1}b_{2N}b_{2N-1}g_{2N,1}g_{2N-1,1}^{-1}b_{2N-2}\cdots b_1g_{2,1},1)\nonumber\\
		&\delta(g_{2K-1,1}^{-1}a_{2K-2}a_{2K-3}g_{2K-2,1}\cdots a_1g_{2,1}k\cdots,g_{2K-1,1}^{-1}b_{2K-2}b_{2K-3}g_{2K-2,1}\cdots b_1g_{2,1}k\cdots)|_{K>1}\nonumber\\
		&\ket{(1,g_{2K,m>1})a^{-1}_{2K-1};(1,g_{2K+1,m>1})a_{2K};a_{2K-1}g_{2K,1}g_{2K-1,1}^{-1}a_{2K-2}a_{2K-3}\cdots a_1g_{2,1} k \cdots}\nonumber\\
		&\bra{(1,g_{2K,m>1})b^{-1}_{2K-1};(1,g_{2K+1,m>1})b_{2K};b_{2K-1}g_{2K,1}g_{2K-1,1}^{-1}b_{2K-2}b_{2K-3}\cdots b_1g_{2,1} k \cdots}, 
	\end{align}
	an incredibly complicated expression! This step is similar to the one we took in the warm-up example before evaluating $\Tr(\tilde\rho_B^n)$. Here and below, we often write terms of the form $xkx^{-1}$ as $xk\cdots$ when $x$ has a long expression. 
	In the first entry of the ket, the notation $(1,g_{2K,m>1})$ represents a list of group elements indexed by $m$. $m=1$ corresponds to the identity $1$, and each $m>1$ corresponds to $g_{2K,m>1}$. These elements are then all multiplied by $a_{2K-1}^{-1}$ from the right. Other similar entries can be interpreted analogously. 
	Now we are ready to compute $\Tr(\tilde\rho^n)$. We again add a superscript $\mu$ to the dummy variables of the $\mu$-th copy of $\tilde\rho$. We see that $b_I^\mu=a_I^{\mu+1}$. The summations over $b_I^\mu$ can then be dropped. In addition, $g_{I,m>1}^\mu=g_{I,m>1}^{\mu+1}$. We can now do the summations over $g_{I,m>1}^\mu$ and get a factor $|G|^{2N(L-1)}$. We rename $g^\mu_{I,1}$ as $g^\mu_I$, and we are left with 
	\begin{align}
		\Tr(\tilde\rho^n)&=\sum_{k^\mu,g_{I}^{\mu},a^\mu_I}|G|^{2N(L-1)}\nonumber\\
		&\delta[(g_{1}^\mu)^{-1}a^\mu_{2N}a^\mu_{2N-1}g^\mu_{2N}(g^\mu_{2N-1})^{-1}a^\mu_{2N-2}\cdots a^\mu_1g^\mu_{2},1]
		\delta[(g_{1}^\mu)^{-1}a^{\mu+1}_{2N}a^{\mu+1}_{2N-1}g^\mu_{2N}(g^\mu_{2N-1})^{-1}a^{\mu+1}_{2N-2}\cdots a^{\mu+1}_1g^\mu_{2},1]\nonumber\\
		&\delta[(g^\mu_{2K-1})^{-1}a^\mu_{2K-2}a^\mu_{2K-3}g^\mu_{2K-2}\cdots a^\mu_1g^\mu_{2}k^\mu\cdots,(g^\mu_{2K-1})^{-1}a^{\mu+1}_{2K-2}a^{\mu+1}_{2K-3}g^\mu_{2K-2}\cdots a^{\mu+1}_1g^\mu_{2}k^\mu\cdots]|_{K>1}\nonumber\\
		&\delta[a^{\mu+1}_{2K-1}g^\mu_{2K}(g^\mu_{2K-1})^{-1}a^{\mu+1}_{2K-2}a^{\mu+1}_{2K-3}\cdots a^{\mu+1}_1g^\mu_{2} k^\mu \cdots,
		a^{\mu+1}_{2K-1}g^{\mu+1}_{2K}(g^{\mu+1}_{2K-1})^{-1}a^{\mu+1}_{2K-2}a^{\mu+1}_{2K-3}\cdots a^{\mu+1}_1g^{\mu+1}_{2} k^\mu \cdots]
		\label{eq:TraceTildeRhonGeneral}
	\end{align}
	It seems too hard to proceed with this expression for a general $N$, so in the following, we will restrict to $N=1$ and $N=2$ which are the cases we need. 
\end{widetext}
\subsubsection{{N}=1}
When $N=1$, $K$ can only be $1$, and we find
\begin{align}
	\Tr(\tilde\rho^n)&=\sum_{k^\mu,g^\mu_I,a^\mu_I}|G|^{2L-2}\nonumber\\
	&\delta[(g^\mu_1)^{-1}a^\mu_2a^\mu_1g^\mu_2,1]\delta[(g^\mu_1)^{-1}a^{\mu+1}_2a^{\mu+1}_1g^\mu_2,1]\nonumber\\
	&\delta( a^{\mu+1}_1g_2^\mu k^\mu\cdots,a_1^{\mu+1}g_2^{\mu+1}k^{\mu+1}\cdots ). 
\end{align}
Notice that in the last delta function, $a^{\mu+1}_1$ appears in both entries and thus can be removed. 
We can make a change of variables: $a^\mu_1\mapsto (a_2^\mu)^{-1} a_1^\mu$, and $k^\mu\mapsto (g_2^\mu)^{-1}k^\mu g_2^\mu$. The $a_2^\mu$ summation can now be done and gives $|G|^n$. The third $\delta$ function becomes $\delta(k^\mu,k^{\mu+1})$. The summation over $k^\mu$ therefore gives a factor of $|G|$. We are left with
\begin{align}
	&\Tr(\tilde\rho^n)=\sum_{g_I^\mu,a_1^\mu}|G|^{2L-1+n}
	\nonumber\\
	&\quad\quad\quad\delta[(g_1^\mu)^{-1}a_1^\mu g_2^\mu,1]\delta[(g_1^\mu)^{-1}a_1^{\mu+1} g_2^\mu,1]. 
\end{align}
We can do the summation over $a^\mu_1$, and get 
\begin{align}
	\Tr(\tilde\rho^n)&=\sum_{g_I^\mu}|G|^{2L-1+n}\delta[g_1^\mu(g_2^\mu)^{-1},g_1^{\mu+1}(g_2^{\mu+1})^{-1}]\\
	&=\sum_{g_I^\mu}|G|^{2L-1+n}\delta[g_1^\mu,g_1^{\mu+1}]=|G|^{2L+2n}. 
\end{align}
The Renyi EE can now be computed: 
\begin{align}
	S^{(n)}_{T_2}=S^{(n)}_{T_1}=\frac{1}{1-n}\log\left[ \frac{\Tr(\tilde\rho^n)}{(\Tr\tilde\rho)^n} \right]=2L\log|G|. 
\end{align}
Recall that we have assumed the two interface circles between $T_1$ and $T_2$ to have the same length $L$. In general the $2L$ factor above should be replaced by the total interface length. We see that the Renyi EE contains only an area-law term, so the topological EE in this case actually vanishes. This is consistent with the field-theoretic result. 

\subsubsection{{N}=2}
Plugging $N=2$ and $K=1,2$ into Supplementary Eq.\ref{eq:TraceTildeRhonGeneral}, we find
\begin{align}
	\Tr(\tilde\rho^n)&=\sum_{k^\mu,g^\mu_{I},a^\mu_I}|G|^{4(L-1)}\nonumber\\
	&\delta[(g_{1}^\mu)^{-1}a^\mu_{4}a^\mu_3g_4^{\mu}(g_3^\mu)^{-1}a_2^{\mu}a^\mu_1g^\mu_{2},1]\nonumber\\
	&\delta[(g_{1}^\mu)^{-1}a^{\mu+1}_{4}a^{\mu+1}_3\cdots a^{\mu+1}_1g^\mu_{2},1]\nonumber\\
	&\delta[(g_3^\mu)^{-1}a^\mu_{2}a^\mu_{1}g^\mu_{2}k^\mu\cdots,(g_3^\mu)^{-1}a^{\mu+1}_{2}a^{\mu+1}_{1}g^\mu_{2}k^\mu\cdots]\nonumber\\
	&\delta[a^{\mu+1}_{1}g^\mu_{2} k^\mu \cdots,
	a^{\mu+1}_{1}g^{\mu+1}_{2}k^\mu \cdots]\nonumber\\
	&\delta[a^{\mu+1}_{3}g^\mu_{4}(g^\mu_{3})^{-1}a^{\mu+1}_{2}a^{\mu+1}_{1}g^\mu_{2} k^\mu \cdots,\nonumber\\
	&~~a^{\mu+1}_{3}g^{\mu+1}_{4}(g^{\mu+1}_{3})^{-1}a^{\mu+1}_{2}a^{\mu+1}_{1}g^{\mu+1}_{2} k^\mu \cdots]. 
\end{align}
The last two delta functions here come from the last delta function of Supplementary Eq.\,\ref{eq:TraceTildeRhonGeneral} with $K=1,2$, respectively. Utilizing the first two delta functions, the last one can be simplified to
\begin{align}
	\delta[(a_4^{\mu+1})^{-1}g_1^\mu k^\mu\cdots,(a_4^{\mu+1})^{-1}g_1^{\mu+1} k^{\mu+1}\cdots]. 
\end{align}
Further removing some obvious redundancies in the delta functions, we are left with 
\begin{align}
	\Tr(\tilde\rho^n)&=\sum_{k^\mu,g^\mu_{I},a^\mu_I}|G|^{4(L-1)}\nonumber\\
	&\delta[(g_{1}^\mu)^{-1}a^\mu_{4}a^\mu_3g_4^{\mu}(g_3^\mu)^{-1}a_2^{\mu}a^\mu_1g^\mu_{2},1]\nonumber\\
	&\delta[(g_{1}^\mu)^{-1}a^{\mu+1}_{4}a^{\mu+1}_3g_4^{\mu}(g_3^\mu)^{-1}a_2^{\mu+1}a^{\mu+1}_1g^\mu_{2},1]\nonumber\\
	&\delta[a^\mu_{2}a^\mu_{1}g^\mu_{2}k^\mu\cdots,a^{\mu+1}_{2}a^{\mu+1}_{1}g^\mu_{2}k^\mu\cdots]\nonumber\\
	&\delta[g^\mu_{2} k^\mu (g^\mu_{2})^{-1},
	g^{\mu+1}_{2}k^{\mu+1} (g^{\mu+1}_{2})^{-1}]\nonumber\\
	&\delta[g_1^\mu k^\mu(g_1^\mu)^{-1},g_1^{\mu+1} k^{\mu+1}(g_1^{\mu+1})^{-1}]. 
\end{align}
Now we do a change of variables: $g_4^\mu\mapsto g_4^\mu g_3^\mu$, $a_3^\mu\mapsto (a_4^\mu)^{-1}a_3^\mu$, $a_1^\mu\mapsto (a_2^\mu)^{-1}a_1^\mu$. The summations over $g_3^\mu$, $a_4^\mu$ and $a_2^\mu$ can now be done and give $|G|^{3n}$. We also do the summation over $g_4^\mu$ which reduces the first two delta functions into a single one. We get 
\begin{align}
	\Tr(\tilde\rho^n)&=|G|^{4(L-1)+3n}\sum_{k^\mu,g_1^\mu,g_2^\mu,a_1^\mu,a_3^\mu}\nonumber\\
	&\delta[a_1^{\mu}g_2^\mu(g_1^\mu)^{-1}a_3^{\mu},a_1^{\mu+1}g_2^\mu(g_1^\mu)^{-1}a_3^{\mu+1}]\nonumber\\
	&\delta[a^\mu_{1}g^\mu_{2}k^\mu\cdots,a^{\mu+1}_{1}g^\mu_{2}k^\mu\cdots]\nonumber\\
	&\delta[g^\mu_{2} k^\mu (g^\mu_{2})^{-1},
	g^{\mu+1}_{2}k^{\mu+1} (g^{\mu+1}_{2})^{-1}]\nonumber\\
	&\delta[g_1^\mu k^\mu(g_1^\mu)^{-1},g_1^{\mu+1} k^{\mu+1}(g_1^{\mu+1})^{-1}]. 
\end{align}
One more change of variables: $k^\mu\mapsto (g_2^\mu)^{-1}k^\mu g_2^\mu$, and $g_1^\mu\mapsto g_1^\mu g_2^\mu$. The summation over $g_2^\mu$ can be done to give $|G|^n$. Renaming $g_1^\mu$ as $g^\mu$, we get
\begin{align}
	\Tr(\tilde\rho^n)&=|G|^{4(L-1)+4n}\sum_{k^\mu,g^\mu,a_1^\mu,a_3^\mu}\nonumber\\
	&\delta[a_1^{\mu}(g^\mu)^{-1}a_3^{\mu},a_1^{\mu+1}(g^\mu)^{-1}a_3^{\mu+1}]\nonumber\\
	&\delta[a^\mu_{1}k^\mu(a^\mu_{1})^{-1},a^{\mu+1}_{1}k^\mu(a^{\mu+1}_{1})^{-1}]\delta[k^\mu,
	k^{\mu+1}]\nonumber\\
	&\delta[g^\mu k^\mu(g^\mu)^{-1},g^{\mu+1} k^{\mu+1}(g^{\mu+1})^{-1}]\\
	&=|G|^{4(L-1)+4n}\sum_{k,g^\mu,a_1^\mu,a_3^\mu}\nonumber\\
	&\delta[a_1^{\mu}(g^\mu)^{-1}a_3^{\mu},a_1^{\mu+1}(g^\mu)^{-1}a_3^{\mu+1}]\nonumber\\
	&\delta[a^\mu_{1}k(a^\mu_{1})^{-1},a^{\mu+1}_{1}k(a^{\mu+1}_{1})^{-1}]\nonumber\\
	&\delta[g^\mu k(g^\mu)^{-1},g^{\mu+1} k(g^{\mu+1})^{-1}]. 
\end{align}
The last delta function is equivalent to  
\begin{align}
	&\delta[g^1 k (g^1)^{-1},g^\mu k(g^\mu)^{-1}]|_{\mu>1}\nonumber\\
	=~&\delta[k,(g^1)^{-1} g^\mu k(g^\mu)^{-1}g^1]|_{\mu>1}. 
\end{align}
We do a change of variables: $g^{\mu>1}\mapsto g^1g^{\mu>1}$, and $a_3^\mu\mapsto g^1 a_3^\mu$. The summation over $g^1$ can now be done and gives $|G|$. We get 
\begin{align}
	\Tr(\tilde\rho^n)&=|G|^{4L-3+4n}\sum_{k,g^{\mu>1},a_1^\mu,a_3^\mu}\nonumber\\
	&\delta(a_1^1a_3^1,a_1^2a_3^2)\delta[a_1^{\mu}(g^\mu)^{-1}a_3^{\mu},a_1^{\mu+1}(g^\mu)^{-1}a_3^{\mu+1}]|_{\mu>1}\nonumber\\
	&\delta[a^\mu_{1}k(a^\mu_{1})^{-1},a^{\mu+1}_{1}k(a^{\mu+1}_{1})^{-1}]\nonumber\\
	&\delta[k,g^\mu k(g^\mu)^{-1}]|_{\mu>1}. 
\end{align}
Notice that although the second last delta function applies to all $\mu$, only $n-1$ values of $\mu$ give independent constraints. Hence, we can restrict that delta function to $2\leq \mu\leq n$. We further rewrite
\begin{align}
	\Tr(\tilde\rho^n)&=|G|^{4L-3+4n}\sum_{k,g^{\mu>1},a_1^\mu,a_3^\mu}\nonumber\\
	&\delta[(a_1^2)^{-1}a_1^1 a_3^1 (a_3^2)^{-1},1]\nonumber\\
	&\delta[(g^\mu)^{-1},(a_1^{\mu})^{-1}a_1^{\mu+1}(g^\mu)^{-1}a_3^{\mu+1}(a_3^{\mu})^{-1}]|_{\mu>1}\nonumber\\
	&\delta[k,(a^\mu_{1})^{-1}a^{\mu+1}_{1}k(a^{\mu+1}_{1})^{-1}(a^\mu_{1})]|_{\mu>1}\nonumber\\
	&\delta[k,g^\mu k(g^\mu)^{-1}]|_{\mu>1}. 
\end{align}
Now define some variables: 
\begin{align}
	&(x^1,x^2,x^3,\cdots,x^n)=\nonumber\\
	&[a_1^2,(a_1^2)^{-1}a_1^3,(a_1^3)^{-1}a_1^4,\cdots,(a_1^n)^{-1}a_1^1],\\
	&(y^1,y^2,y^3,\cdots,y^n)=\nonumber\\
	&[a_3^2,a_3^3(a_3^2)^{-1},a_3^4(a_3^3)^{-1},\cdots,a_3^1(a_3^n)^{-1}]. 
\end{align}
The maps from $a_1^\mu,a_3^\mu$ to $x^\mu,y^\mu$ are one-to-one, so we can change the summation variables to $x^\mu$ and $y^\mu$. The summations over $x^1$ and $y^1$ can then be done, giving $|G|^2$. We get 
\begin{align}
	\Tr(\tilde\rho^n)&=|G|^{4L-1+4n}\sum_{k,g^{\mu>1}, x^{\mu>1},y^{\mu>1}}\nonumber\\
	&\delta(x^2x^3\cdots x^n y^n\cdots y^3 y^2,1)\nonumber\\
	&\delta[(g^\mu)^{-1},x^\mu(g^\mu)^{-1}y^\mu]|_{\mu>1}\nonumber\\
	&\delta[k,x^\mu k (x^\mu)^{-1}]|_{\mu>1}\delta[k,g^\mu k(g^\mu)^{-1}]|_{\mu>1}. 
\end{align}
Using the second delta function, we find $(y^\mu)^{-1}=g^\mu x^\mu(g^\mu)^{-1}$. We then have
\begin{align}
	&\Tr(\tilde\rho^n)=|G|^{4L-1+4n}\sum_{k,g^{\mu>1}, x^{\mu>1}}\nonumber\\
	&\delta[x^2 x^3 \cdots x^n,g^2 x^2(g^2)^{-1} g^3 x^3(g^3)^{-1}\cdots g^n x^n(g^n)^{-1}]\nonumber\\
	&\delta[k,x^\mu k (x^\mu)^{-1}]|_{\mu>1}\delta[k,g^\mu k(g^\mu)^{-1}]|_{\mu>1}. 
\end{align}
The last two delta functions imply that $x^\mu$ and $g^\mu$ both belong to the centralizer of $k$, namely $Z(k):=\{ g\in G|gk=kg \}$. We obtain the result
\begin{align}
	&\Tr(\tilde\rho^n)=|G|^{4L-1+4n}\sum_{k\in G}~~\sum_{g^{\mu>1}, x^{\mu>1}\in Z(k)}\nonumber\\
	&\delta[x^2 x^3 \cdots x^n,g^2 x^2(g^2)^{-1} g^3 x^3(g^3)^{-1}\cdots g^n x^n(g^n)^{-1}]. 
\end{align}
We can no longer simplify this expression just by changing variables, but it can be computed using nonabelian Fourier transforms. We summarize the key step as the following lemma. 
\begin{lemma}
	Let $H$ be a finite group, $z_i$ and $h_i$ with $i=1,2,3,\cdots,r$ be group elements. Then, 
	\begin{align}
		&\sum_{h_i, z_i\in H}\delta[z_1z_2\cdots z_r,(h_1z_1h_1^{-1})(h_2z_2h_2^{-1})\cdots(h_r z_r h_r^{-1})]\nonumber\\
		&=|H|^{2r-1}\sum_{\mu\in{\rm Irrep}(H)}d_\mu^{2-2r}, 
	\end{align}
	where ${\rm Irrep}(H)$ is the set of irreducible representations of $H$, and $d_\mu$ is the dimension of the representation $\mu$. 
\end{lemma}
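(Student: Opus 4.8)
The plan is to evaluate the sum by a nonabelian Fourier transform: expand the Kronecker delta into irreducible characters of $H$ and then carry out the two group sums using Schur's lemma. I would start from the standard expansion of the group delta, $\delta(a,b)=|H|^{-1}\sum_{\mu}d_\mu\,\chi_\mu(ab^{-1})$, where the sum runs over ${\rm Irrep}(H)$ and $\chi_\mu$ is the character of $\mu$. Applying this with $a=z_1\cdots z_r$ and $b=(h_1z_1h_1^{-1})\cdots(h_rz_rh_r^{-1})$, and using $(h_iz_ih_i^{-1})^{-1}=h_iz_i^{-1}h_i^{-1}$, the summand becomes
\begin{equation*}
\frac{1}{|H|}\sum_{\mu}d_\mu\,\Tr\!\left[D^\mu(z_1)\cdots D^\mu(z_r)\,A_rA_{r-1}\cdots A_1\right],\qquad A_i:=D^\mu(h_i)\,D^\mu(z_i)^{-1}\,D^\mu(h_i)^{-1},
\end{equation*}
where $D^\mu$ is the representation matrix of $\mu$. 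The problem then reduces to computing $\sum_{z_i,h_i}$ of this trace for each fixed $\mu$.

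The key step is to do the $h_i$ sums first. I would invoke the conjugation-averaging identity, a direct consequence of Schur's lemma, namely $\sum_{h\in H}D^\mu(h)\,M\,D^\mu(h)^{-1}=\tfrac{|H|}{d_\mu}\Tr(M)\,I$ for any matrix $M$, since the left-hand side is an intertwiner and hence proportional to the identity, with the constant fixed by taking a trace. Applied to $M=D^\mu(z_i)^{-1}$, this collapses each $A_i$ into a scalar, $\sum_{h_i}A_i=\tfrac{|H|}{d_\mu}\chi_\mu(z_i^{-1})\,I$. Because every $A_i$ independently becomes a multiple of the identity, they all factor out of the trace, leaving $(|H|/d_\mu)^r\big(\prod_i\chi_\mu(z_i^{-1})\big)\chi_\mu(z_1\cdots z_r)$.

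The remaining $z_i$ sums I would handle with the Schur orthogonality of matrix elements, $\sum_{z\in H}D^\mu(z)_{ab}\,\overline{D^\mu(z)_{cd}}=\tfrac{|H|}{d_\mu}\delta_{ac}\delta_{bd}$. Writing the surviving trace $\chi_\mu(z_1\cdots z_r)$ with explicit indices around its cyclic loop and pairing each $z_i$ against its factor $\chi_\mu(z_i^{-1})=\overline{\chi_\mu(z_i)}$, the sum over each $z_i$ produces a Kronecker delta linking adjacent loop indices; the resulting chain of deltas forces all loop indices equal and contributes $(|H|/d_\mu)^r d_\mu=|H|^r/d_\mu^{r-1}$. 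Combining the two stages gives $\sum_{z_i,h_i}\chi_\mu[\cdots]=|H|^{2r}/d_\mu^{2r-1}$, and reinstating the prefactor $\tfrac{1}{|H|}\sum_\mu d_\mu$ yields exactly $|H|^{2r-1}\sum_\mu d_\mu^{2-2r}$. I expect the main obstacle to be purely organizational rather than conceptual: correctly tracking the nested conjugations $A_r\cdots A_1$ and recognizing that the $h_i$-averages decouple them into commuting scalars, since it is this decoupling that makes the subsequent $z_i$ orthogonality sums factorize cleanly.
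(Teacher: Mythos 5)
Your proof is correct, and it rests on the same underlying machinery as the paper's --- harmonic analysis on $H$ via Schur orthogonality (what the paper calls the nonabelian Fourier transform) --- but the computation is routed differently, and the difference is worth recording. The paper writes the delta as an inner product $\braket{(h_1z_1h_1^{-1})\cdots(h_rz_rh_r^{-1})|z_1\cdots z_r}$ in $L^2(H)$, Fourier-transforms both vectors into matrix elements, and then performs the $z_i$ sums \emph{first} and the $h_i$ sums \emph{second}, tracking the index contractions with a diagrammatic calculus. You expand the delta directly in characters, $\delta(a,b)=|H|^{-1}\sum_\mu d_\mu\chi_\mu(ab^{-1})$ --- an equivalent starting point, since contracting the paper's two Fourier indices turns its pair of matrix elements into precisely this character --- and then reverse the order of summation: you average over each $h_i$ first, using the Schur's-lemma identity
\begin{align}
	\sum_{h\in H}D^\mu(h)\,M\,D^\mu(h)^{-1}=\frac{|H|}{d_\mu}\Tr(M)\,I,
\end{align}
which collapses each conjugated factor to the scalar $\frac{|H|}{d_\mu}\chi_\mu(z_i^{-1})$ before any $z_i$ sum is attempted. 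This decoupling is exactly what makes the remaining $z_i$ sums a routine cyclic chain of orthogonality contractions, with no diagrammatic bookkeeping needed; each stage contributes $(|H|/d_\mu)^r$ with a final factor of $d_\mu$, reproducing the paper's per-irrep weight $(|H|/d_\mu)^{2r}d_\mu$ and hence the stated result. As for what each route buys: the paper's diagrammatic calculus is agnostic to the order of summation and scales to messier contraction patterns, while your ordering exploits the specific structure of this lemma (each $h_i$ appears in exactly one conjugation) to keep the entire calculation in closed algebraic form --- for this particular identity, yours is the leaner derivation.
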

\begin{proof}
	Denote by $L^2(H)$ the Hilbert space generated by the orthonormal basis $\{ \ket{h}|h\in H \}$. Another orthonormal basis of $L^2(H)$ consists of the following states. 
	\begin{align}
		\ket{\mu;a,b}=\sqrt{\frac{d_\mu}{|H|}}\sum_{h\in H}\bar D^\mu_{ab}(h)\ket{h}. 
	\end{align}
	Here, $\mu\in{\rm Irrep}(H)$, $a,b\in\{1,2,\cdots,d_\mu\}$, $D^\mu(h)$ is the matrix for $h$ in the representation $\mu$, and $\bar D^\mu(h)$ is the complex conjugate of $D^\mu(h)$. 
	The orthonormality of this new basis lies in Schur's orthogonality relation: 
	\begin{align}
		\sum_{h\in H}\bar D^\mu_{ab}(h)D^\nu_{cd}(h)=\frac{|H|}{d_\mu}\delta_{\mu\nu}\delta_{ac}\delta_{bd}. 
	\end{align}
	The inverse map is given by 
	\begin{align}
		\ket{h}=\sum_{\mu\in {\rm Irrep}(H)}\sum_{a,b=1}^{d_\mu}\sqrt{\frac{d_\mu}{|H|}}D^\mu_{ab}(h)\ket{\mu;a,b}. 
	\end{align}
	This is called the nonabelian Fourier transform. 
	
	Denote by $\mc I$ the summation in the statement of the lemma. We can rewrite the delta function there as an inner product in $L^2(H)$: 
	\begin{align}
		\mc I=\sum_{h_i, z_i\in H}\braket{(h_1z_1h_1^{-1})\cdots(h_r z_r h_r^{-1})|z_1\cdots z_r}. 
	\end{align}
	Applying nonabelian Fourier transforms, 
	\begin{align}
		\mc I=\sum_{h_i,z_i}\sum_{\mu}\sum_{a,b=1}^{d_\mu}\frac{d_\mu}{|H|}\bar D^\mu_{ab}(h_1z_1h_1^{-1}\cdots)D^\mu_{ab}(z_1z_2\cdots z_r). 
	\end{align}
	It is now useful to introduce a diagrammatic language: 
	\begin{align}
		D^\mu_{ab}(h)=\dia{D}{-15}{0},\quad \bar D^\mu_{ab}(h)=\dia{barD}{-15}{0}. 
	\end{align}
	The orthogonality relation then looks like
	\begin{align}
		\dia{RepExchange}{-30}{0}=\frac{|H|}{d_\mu}\delta_{\mu\nu}\dia{Split}{-30}{0}. 
	\end{align}
	Using this diagrammatic language, and the fact that $D^\mu$ matrices satisfy the group multiplication laws, we compute the summations over $z_i$ as 
	\begin{align}
		&\sum_{z_i}\bar D^\mu_{ab}(h_1z_1h_1^{-1}\cdots)D^\mu_{ab}(z_1z_2\cdots z_r)\nonumber\\
		&=\dia{FirstSum}{-30}{0}\\
		&=\left(\frac{|H|}{d_\mu}\right)^r\dia{FirstSumResult}{-22}{0}. 
	\end{align}
	Since all the representations are unitary, we have $\bar D^\mu_{ab}(h^{-1})=D^\mu_{ba}(h)$. This implies, for example, 
	\begin{align}
		\dia{PreSliding}{-22.5}{0}=\dia{PostSliding}{-30}{0}. 
	\end{align}
	We can now do the summations over $h_i$, and find
	\begin{align}
		&\sum_{h_i}\sum_{z_i}\bar D^\mu_{ab}(h_1z_1h_1^{-1}\cdots)D^\mu_{ab}(z_1z_2\cdots z_r)\nonumber\\
		&=\left(\frac{|H|}{d_\mu}\right)^r\dia{SecondSum}{-30}{0}\\
		&=\left(\frac{|H|}{d_\mu}\right)^{2r}\dia{SecondSumResult}{-16}{0}=\left(\frac{|H|}{d_\mu}\right)^{2r}d_\mu
	\end{align}
	The lemma can be proved by plugging this result into the expression for $\mc I$. 
\end{proof}
We can now use this lemma to compute $\Tr(\tilde\rho^n)$, by identifying $r$ with $n-1$ and $H$ with $Z(k)$. Suppose $k$ belongs to the conjugacy class $C$. Then $Z(k)$ is isomorphic to $Z_C$ which is the centralizer of some arbitrary representative of $C$. We then find
\begin{align}
	\Tr(\tilde\rho^n)=|G|^{4L+6n-4}\sum_{(C,\mu)}\left(|C|d_\mu\right)^{4-2n},  
\end{align}
where $\mu\in {\rm Irrep}(Z_C)$, and we have used $|G|=|C||Z_C|$. In particular, $\Tr\tilde\rho=|G|^{4L+4}$, where we used $\sum_{\mu} d_\mu^2=|Z_C|$. We can then obtain the Renyi entropy
\begin{align}
	S^{(n)}_{T_2\cup T_4}&=S^{(n)}_{T_1\cup T_3}=\frac{1}{1-n}\log\left[ \frac{\Tr(\tilde\rho^n)}{(\Tr\tilde\rho)^n} \right]\\
	&=4L\log|G|+\frac{1}{1-n}\log\left[ \sum_{(C,\mu)}\left( \frac{d_{(C,\mu)}}{|G|}\right)^{4-2n} \right], 
\end{align}
where $d_{(C,\mu)}:=|C|d_\mu$. Recall again that we have assumed the four $T_iT_{i+1}$ interface circles to have the same length $L$. In general the $4L$ factor here should be replaced with the total interface length. 

\subsection{Mutual Information on a Torus}
Using the previous results, we get
\begin{align}
	I^{(n)}(T_1,T_3)=\frac{1}{n-1}\log\left[ \sum_{(C,\mu)}\left( \frac{d_{(C,\mu)}}{|G|}\right)^{4-2n} \right],  
\end{align}
suggesting that $d_{(C,\mu)}$ are nothing but the anyon quantum dimensions. 
\end{document}